\def\be{\begin{equation}}
\def\ee{\end{equation}}
\def\bq{\begin{eqnarray}}
\def\eq{\end{eqnarray}}
\def\beq{\begin{eqnarray*}}
\def\eeq{\end{eqnarray*}}
\begin{document}

%





\authorrunninghead{J.M. Ginoux and J. Llibre}
\titlerunninghead{Canards Existence in $\mathbb{R}^{2+2}$}

\title{Canards Existence in FitzHugh-Nagumo\\ and \\ Hodgkin-Huxley Neuronal Models}

\author{Jean-Marc Ginoux\thanks{I would like to thank the Universitat
Aut\`onoma de Barcelona for its kind invitation that allowed this
paper to be written.}}
\affil{Laboratoire LSIS, CNRS, UMR 7296, Universit\'{e} de Toulon,\\
 BP 20132, F-83957 La Garde cedex, France.}
\email{ginoux@univ-tln.fr}
\and

\author{Jaume Llibre\thanks{Partially supported by a DGES grant number PB96--1153.}}
\affil{Departament de Matem\`{a}tiques, Universitat Aut\`{o}noma de Barcelona,\\
08193 -- Bellaterra, Barcelona, Spain.}
\email{jllibre@mat.uab.es}

\vspace{24pt}
{\begin{minipage}{24pc}
\footnotesize{\ \ \ \ In a previous paper we have proposed a new method for proving the existence of ``canard solutions'' for three and four-dimensional singularly perturbed systems with only one \textit{fast} variable which improves the methods used until now. The aim of this work is to extend this method to the case of four-dimensional singularly perturbed systems with two \textit{slow} and two \textit{fast} variables. This method enables to state a unique generic condition for the existence of ``canard solutions'' for such four-dimensional singularly perturbed systems which is based on the stability of \textit{folded singularities} (\textit{pseudo singular points} in this case) of the \textit{normalized slow dynamics} deduced from a well-known property of linear algebra. This unique generic condition is identical to that provided in previous works. Applications of this method to the famous coupled FitzHugh-Nagumo equations and to the Hodgkin-Huxley model enables to show the existence of ``canard solutions'' in such systems.}
\end{minipage}}
\vspace{10pt}

\keywords{Geometric singular perturbation theory, singularly perturbed dynamical systems, canard solutions.}

\begin{article}

\section{Introduction}

In the beginning of the eighties, Beno\^{i}t and Lobry \cite{BenoitLobry}, Beno\^{i}t \cite{Benoit1983} and then Beno\^{i}t \cite{Benoit1984} in his PhD-thesis studied canard solutions in $\mathbb{R}^3$. In the article entitled ``Syst\`{e}mes lents-rapides dans $\mathbb{R}^3$ et leurs canards,''  Beno\^{i}t \cite[p. 170]{Benoit1983} proved the existence of canards solution for three-dimensional singularly perturbed systems with two \textit{slow} variables and one \textit{fast} variable while using ``Non-Standard Analysis''according to a theorem which stated that canard solutions exist in such systems provided that the \textit{pseudo singular point}\footnote{This concept has been originally introduced by Jos\'{e} Arg\'{e}mi \cite{Argemi}. See Sec. 1.8.} of the \textit{slow dynamics}, \textit{i.e.}, of the \textit{reduced vector field} is of \textit{saddle} type. Nearly twenty years later, Szmolyan and Wechselberger \cite{SzmolyanWechselberger2001} extended ``Geometric Singular Perturbation Theory\footnote{See Fenichel \cite{Fen1971,Fen1979}, O'Malley \cite{OMalley1974}, Jones \cite{Jones1994} and Kaper \cite{Kaper1999}}'' to canards problems in $\mathbb{R}^3$ and provided a ``standard version'' of Beno\^{i}t's theorem \cite{Benoit1983}. Very recently, Wechselberger \cite{Wechselberger2012} generalized this theorem for $n$-dimensional singularly perturbed systems with $k$ \textit{slow} variables and $m$ \textit{fast} (Eq. (\ref{eq1})). The method used by Szmolyan and Wechselberger \cite{SzmolyanWechselberger2001} and Wechselberger \cite{Wechselberger2012} require to implement a ``desingularization procedure'' which can be summarized as follows: first, they compute the \textit{normal form} of such singularly perturbed systems which is expressed according to some coefficients ($a$ and $b$ for dimension three and $\tilde{a}$, $\tilde{b}$ and $\tilde{c}_j$ for dimension four) depending on the functions defining the original vector field and their partial derivatives with respect to the variables. Secondly, they project the ``desingularized vector field'' (originally called ``normalized slow dynamics'' by Eric Beno\^{i}t \cite[p. 166]{Benoit1983}) of such a \textit{normal form} on the tangent bundle of the critical manifold. Finally, they evaluate the Jacobian of the projection of this ``desingularized vector field'' at the \textit{folded singularity} (originally called \textit{pseudo singular points} by Jos\'{e} Arg\'{e}mi \cite[p. 336]{Argemi}). This lead Szmolyan and Wechselberger \cite[p. 427]{SzmolyanWechselberger2001} and Wechselberger \cite[p. 3298]{Wechselberger2012} to a ``classification of \textit{folded singularities} (\textit{pseudo singular points})''. Thus, they showed that for three-dimensional singularly perturbed systems such \textit{folded singularity} is of \textit{saddle type} if the following condition is satisfied: $a<0$ while for four-dimensional singularly perturbed systems such \textit{folded singularity} is of \textit{saddle type} if $\tilde{a}<0$. Then, Szmolyan and Wechselberger \cite[p. 439]{SzmolyanWechselberger2001} and Wechselberger \cite[p. 3304]{Wechselberger2012} established their Theorem 4.1. which state that ``\textit{In the folded saddle and in the folded node case singular canards perturb to maximal canard for sufficiently small $\varepsilon$}''. However, in their works neither Szmolyan and Wechselberger \cite{SzmolyanWechselberger2001} nor Wechselberger \cite{Wechselberger2012} did not provide (to our knowledge) the expression of these constants ($a$ and $\tilde{a}$) which are necessary to state the existence of canard solutions in such systems.

In a previous paper entitled: ``Canards Existence in Memristor's Circuits'' (see Ginoux \& Llibre \cite{GinouxLLibre2015}) we first provided the expression of these constants and then showed that they can be directly determined starting from the \textit{normalized slow dynamics} and not from the projection of the ``desingularized vector field'' of the \textit{normal form}. This method enabled to state a unique ``generic'' condition for the existence of ``canard solutions'' for such three and four-dimensional singularly perturbed systems which is based on the stability of \textit{folded singularities} of the \textit{normalized slow dynamics} deduced from a well-known property of linear algebra. This unique condition which is completely identical to that provided by Beno\^{i}t \cite{Benoit1983} and then by Szmolyan and Wechselberger \cite{SzmolyanWechselberger2001} and finally by Wechselberger \cite{Wechselberger2012} is ``generic'' since it is exactly the same for singularly perturbed systems of dimension three and four with only one \textit{fast} variable.

The aim of this work is to extend this method to the case of four-dimensional singularly perturbed systems with $k=2$ \textit{slow} and $m=2$ \textit{fast} variables. Since the dimension of the system is $m=k+m$, such problem is known as ``canards existence in $\mathbb{R}^{2+2}$''. Moreover, in this particular case where $k=m=2$, the \textit{folded singularities} of Wechselberger \cite[p. 3298]{Wechselberger2012} are nothing else but the \textit{pseudo singular points} of the late Jos\'{e} Arg\'{e}mi \cite{Argemi} as we will see below. Following the previous works, we show that for such four-dimensional singularly perturbed systems \textit{pseudo singular points} are of \textit{saddle type} if $\tilde{a}<0$. Then, according Theorem 4.1. of Wechselberger \cite[p. 3304]{Wechselberger2012} we provide the expression of this constant $\tilde{a}$ which is necessary to establish the existence of canard solutions in such systems. So, we can state that the condition $\tilde{a}<0$ for existence of canards in such $\mathbb{R}^{2+2}$ is ``generic'' since it is exactly the same for singularly perturbed systems of dimension three and four with only one \textit{fast} variable.

The outline of this paper is as follows. In Sec. 1, definitions of singularly perturbed system, critical manifold, reduced system, ``constrained system'', canard cycles, folded singularities and pseudo singular points are recalled. The method proposed in this article is presented in Sec. 2 for the case of four-dimensional singularly perturbed systems with two \textit{fast} variables. In Sec. 3, applications of this method to the famous coupled FitzHugh-Nagumo equations and to the Hodgkin-Huxley model enables to show the existence of ``canard solutions'' in such systems.

\section{Definitions}

\subsection{Singularly perturbed systems}
\label{Sec1}

According to Tikhonov \cite{Tikhonov1948}, Jones \cite{Jones1994} and Kaper \cite{Kaper1999} \textit{singularly perturbed systems} are defined as:

\begin{equation}
\label{eq1}
\begin{aligned}
{\vec {x}}' & = \varepsilon \vec{f} \left( {\vec{x},\vec{y},\varepsilon} \right), \\
{\vec {y}}' & = \vec {g}\left( {\vec{x}, \vec{y},\varepsilon }
\right).
\end{aligned}
\end{equation}

where $\vec {x} \in \mathbb{R}^k$, $\vec {y} \in \mathbb{R}^m$, $\varepsilon \in \mathbb{R}^ + $, and the prime denotes
differentiation with respect to the independent variable $t'$. The functions $\vec {f}$ and $\vec {g}$ are assumed to be $C^\infty$
functions\footnote{In certain applications these functions will be supposed to be $C^r$, $r \geqslant 1$.} of $\vec {x}$, $\vec {y}$
and $\varepsilon$ in $U\times I$, where $U$ is an open subset of $\mathbb{R}^k\times \mathbb{R}^m$ and $I$ is an open interval
containing $\varepsilon = 0$.

\smallskip

In the case when $0 < \varepsilon \ll 1$, \textit{i.e.} $\varepsilon$ is a small positive number, the variable $\vec {x}$ is called \textit{slow} variable, and $\vec {y}$ is called \textit{fast} variable. Using Landau's notation: $O\left( \varepsilon^p \right)$ represents a function $f$ of $u$ and $\varepsilon $ such that $f(u,\varepsilon) / \varepsilon^p$ is bounded for positive $\varepsilon$  going to zero, uniformly for $u$ in the given domain.

\smallskip

In general we consider that $\vec {x}$ evolves at an $O\left( \varepsilon \right)$ rate; while $\vec {y}$ evolves at an $O\left( 1 \right)$ \textit{slow} rate. Reformulating system (\ref{eq1}) in terms of the rescaled variable $t = \varepsilon t'$, we obtain

\begin{equation}
\label{eq2}
\begin{aligned}
\dot {\vec {x}} & = \vec{f} \left( {\vec{x},\vec{y},\varepsilon} \right), \\
\varepsilon \dot {\vec {y}} & = \vec {g}\left( {\vec{x}, \vec{y},\varepsilon }
\right).
\end{aligned}
\end{equation}

The dot represents the derivative with respect to the new independent variable $t$.

\smallskip

The independent variables $t'$ and $t$ are referred to the \textit{fast} and \textit{slow} times, respectively, and (\ref{eq1}) and (\ref{eq2}) are called the \textit{fast} and \textit{slow} systems, respectively. These systems are equivalent whenever $\varepsilon \ne 0$, and they are labeled \textit{singular perturbation problems} when $0 < \varepsilon \ll 1$. The label ``singular'' stems in part from the discontinuous limiting behavior
in system (\ref{eq1}) as $\varepsilon \to 0$.

\smallskip

\subsection{Reduced slow system}

In such case system (\ref{eq2}) leads to a differential-algebraic system (D.A.E.) called \textit{reduced slow system} whose dimension decreases from $k + m = n$ to $m$. Then, the \textit{slow} variable $\vec {x} \in \mathbb{R}^k$ partially evolves in the submanifold $M_0$ called the \textit{critical manifold}\footnote{It represents the approximation of the slow invariant manifold, with an error of $O(\varepsilon)$.}. The \textit{reduced slow system} is

\begin{equation}
\label{eq3}
\begin{aligned}
\dot {\vec {x}} & = \vec{f} \left( {\vec{x},\vec{y},\varepsilon} \right), \\
\vec {0} & = \vec {g}\left( {\vec{x}, \vec{y},\varepsilon }
\right).
\end{aligned}
\end{equation}

\subsection{Slow Invariant Manifold}

The \textit{critical manifold} is defined by

\begin{equation}
\label{eq4} M_0 := \left\{ {\left( {\vec {x},\vec {y}} \right):\vec
{g}\left( {\vec {x},\vec {y},0} \right) = {\vec {0}}} \right\}.
\end{equation}

Such a normally hyperbolic invariant manifold (\ref{eq4}) of the \textit{reduced slow system} (\ref{eq3}) persists as a locally invariant \textit{slow manifold} of the full problem (\ref{eq1}) for $\varepsilon$ sufficiently small. This locally \textit{slow invariant manifold} is $O(\varepsilon)$ close to the \textit{critical manifold}.

When $D_{\vec{x}}\vec{f}$ is invertible, thanks to the Implicit Function Theorem, $M_0 $ is given by the graph of a $C^\infty $ function $\vec {x} = \vec {G}_0 \left( \vec {y} \right)$ for $\vec {y} \in D$, where $D\subseteq \mathbb{R}^k$ is a compact, simply connected domain and the boundary of D is a $(k - 1)$--dimensional $C^\infty$ submanifold\footnote{The set D is overflowing invariant with respect to (\ref{eq2}) when $\varepsilon = 0$. See Kaper \cite{Kaper1999} and Jones \cite{Jones1994}.}.

\smallskip

According to Fenichel \cite{Fen1971,Fen1979} theory if $0 < \varepsilon \ll 1$ is sufficiently small, then there exists a function $\vec {G}\left( {\vec {y},\varepsilon } \right)$ defined on D such that the manifold

\begin{equation}
\label{eq5} M_\varepsilon := \left\{ {\left( {\vec {x},\vec {y}}
\right):\vec {x} = \vec {G}\left( {\vec {y},\varepsilon } \right)} \right\},
\end{equation}

is locally invariant under the flow of system (\ref{eq1}). Moreover, there exist perturbed local stable (or attracting) $M_a$ and unstable (or repelling) $M_r$ branches of the \textit{slow invariant manifold} $M_\varepsilon$. Thus, normal hyperbolicity of $M_\varepsilon$ is lost via a saddle-node bifurcation of the \textit{reduced slow system} (\ref{eq3}). Then, it gives rise to solutions of ``canard'' type.

\subsection{Canards, singular canards and maximal canards}

A \textit{canard} is a solution of a singularly perturbed dynamical system (\ref{eq1}) following the \textit{attracting} branch $M_a$ of the \textit{slow invariant manifold}, passing near a bifurcation point located on the fold of this \textit{slow invariant manifold}, and then following the \textit{repelling} branch $M_r$ of the \textit{slow invariant manifold}.

A \textit{singular canard} is a solution of a \textit{reduced slow system} (\ref{eq3}) following the \textit{attracting} branch $M_{a,0}$ of the \textit{critical manifold}, passing near a bifurcation point located on the fold of this \textit{critical manifold}, and then following the \textit{repelling} branch $M_{r,0}$ of the \textit{critical manifold}.

A \textit{maximal canard} corresponds to the intersection of the attracting and repelling branches $M_{a,\varepsilon} \cap M_{r,\varepsilon}$ of the slow manifold in the vicinity of a non-hyperbolic point.

According to Wechselberger \cite[p. 3302]{Wechselberger2012}:

\begin{quote}
``Such a maximal canard defines a family of canards nearby which are exponentially close to the maximal canard, \textit{i.e.} a family of solutions of (\ref{eq1}) that follow an attracting branch $M_{a,\varepsilon}$ of the slow manifold and then follow, rather surprisingly, a repelling/saddle branch
$M_{r,\varepsilon}$ of the slow manifold for a considerable amount of slow time. The existence of this family of canards is a consequence of the non-uniqueness of $M_{a,\varepsilon}$ and $M_{r,\varepsilon}$. However, in the singular limit $\varepsilon \rightarrow 0$, such a family of canards is represented by a unique singular canard.''
\end{quote}

Canards are a special class of solution of singularly perturbed dynamical systems for which normal hyperbolicity is lost. Canards in singularly perturbed systems with two or more slow variables $(\vec {x} \in \mathbb{R}^k$, $k \geqslant 2)$ and one fast variable $(\vec {y} \in \mathbb{R}^m$, $m = 1)$ are robust, since maximal canards generically persist under small parameter changes\footnote{See Beno\^{i}t \cite{Benoit1983, Benoit2001}, Szmolyan and Wechselberger \cite{SzmolyanWechselberger2001} and Wechselberger \cite{Wechselberger2005,Wechselberger2012}.}.

\subsection{Constrained system}

In order to characterize the ``slow dynamics'',  \textit{i.e.} the slow trajectory of the \textit{reduced slow system} (\ref{eq3})  (obtained by setting $\varepsilon = 0$ in (\ref{eq2})), Floris Takens \cite{Takens1976} introduced the ``constrained system'' defined as follows:

\begin{equation}
\label{eq6}
\begin{aligned}
\dot {\vec {x}} & = \vec{f} \left( {\vec{x},\vec{y},0} \right), \\
D_{\vec{y}} \vec{g} . \dot {\vec {y}} & =  - ( D_{\vec{x}} \vec{g} . \vec {f} ) \left( {\vec{x}, \vec{y}, 0} \right),\\
\vec {0} & = \vec {g}\left( {\vec{x}, \vec{y}, 0 } \right).
\end{aligned}
\end{equation}

\smallskip

Since, according to Fenichel \cite{Fen1971,Fen1979}, the \textit{critical manifold} $\vec{g} \left( {\vec{x},\vec{y},0} \right)$ may be considered as locally invariant under the flow of system (\ref{eq1}), we have:

\[
\hfill \frac{d\vec{g}}{dt} \left( {\vec{x},\vec{y},0} \right) = 0 \quad \Longleftrightarrow \quad D_{\vec{x}} \vec{g} . \dot{\vec{x}} + D_{\vec{y}} \vec{g} . \dot {\vec {y}} = \vec{0}. \hfill
\]

By replacing $\dot{\vec{x}}$ by $\vec{f}\left( {\vec{x},\vec{y},0} \right)$ leads to:

\[
\hfill D_{\vec{x}} \vec{g} . \vec{f}\left( {\vec{x},\vec{y},0} \right) + D_{\vec{y}} \vec{g} . \dot {\vec {y}} = \vec{0}. \hfill
\]

This justifies the introduction of the \textit{constrained system}.

Now, let $adj(D_{\vec{y}} \vec{g})$ denote the adjoint of the matrix $D_{\vec{y}} \vec{g}$ which is the transpose of the co-factor matrix $D_{\vec{y}} \vec{g}$, then while multiplying the left hand side of (\ref{eq6}) by the inverse matrix $(D_{\vec{y}} \vec{g})^{-1}$ obtained by the adjoint method we have:

\begin{equation}
\label{eq7}
\begin{aligned}
\dot {\vec {x}} & = \vec{f} \left( {\vec{x},\vec{y},0} \right), \\
det(D_{\vec{y}} \vec{g}) \dot {\vec {y}} & =  - (adj(D_{\vec{y}} \vec{g}) . D_{\vec{x}} \vec{g} . \vec {f} ) \left( {\vec{x}, \vec{y}, 0} \right),\\
\vec {0} & = \vec {g}\left( {\vec{x}, \vec{y}, 0 } \right).
\end{aligned}
\end{equation}

\subsection{Normalized slow dynamics}

Then, by rescaling the time by setting $t = - det(D_{\vec{y}} \vec{g}) \tau$ we obtain the following system which has been called by Eric Beno\^{i}t \cite[p. 166]{Benoit1983} ``normalized slow dynamics'':

\begin{equation}
\label{eq8}
\begin{aligned}
\dot {\vec {x}} & = - det(D_{\vec{y}} \vec{g}) \vec{f} \left( {\vec{x},\vec{y},0} \right), \\
\dot {\vec {y}} & =  (adj(D_{\vec{y}} \vec{g}) . D_{\vec{x}} \vec{g} . \vec {f} ) \left( {\vec{x}, \vec{y}, 0} \right),\\
\vec {0} & = \vec {g}\left( {\vec{x}, \vec{y}, 0 } \right).
\end{aligned}
\end{equation}

where the overdot now denotes the time derivation with respect to $\tau$.

Let's notice that Jos\'{e} Arg\'{e}mi \cite{Argemi} proposed to rescale time by setting $t = - det(D_{\vec{y}} \vec{g}) sgn(det(D_{\vec{y}} \vec{g})) \tau$ in order to keep the same flow direction in (\ref{eq8}) as in (\ref{eq7}).

\subsection{Desingularized vector field}

By application of the Implicit Function Theorem, let suppose that we can explicitly express from Eq. (\ref{eq4}), say without loss of generality, $x_1$ as a function $\phi_1$ of the other variables. This implies that $M_0$ is locally the graph of a function $\phi_1 \mbox{ : } \mathbb{R}^k \to \mathbb{R}^m$ over the base $U = ( \vec{ \chi }, \vec{y} ) $ where $\vec{ \chi } = (x_2, x_3, ..., x_k)$. Thus, we can span the ``normalized slow dynamics'' on the tangent bundle at the \textit{critical manifold} $M_0$ at the \textit{pseudo singular point}. This leads to the so-called \textit{desingularized vector field}:

\begin{equation}
\label{eq9}
\begin{aligned}
\dot {\vec {\chi}} & = - det(D_{\vec{y}} \vec{g}) \vec{f} \left( {\vec{\chi},\vec{y},0} \right), \\
\dot {\vec {y}} & =  (adj(D_{\vec{y}} \vec{g}) . D_{\vec{x}} \vec{g} . \vec {f} ) \left( {\vec{\chi}, \vec{y}, 0} \right).
\end{aligned}
\end{equation}

\subsection{Pseudo singular points and folded singularities}

As recalled by Guckenheimer and Haiduc \cite[p. 91]{GuckenHaiduc2005}, \textit{pseudo-singular points} have been introduced by the late Jos\'{e} Arg\'{e}mi \cite{Argemi} for low-dimensional singularly perturbed systems and are defined as singular points of the ``normalized slow dynamics'' (\ref{eq8}). Twenty-three years later, Szmolyan and Wechselberger \cite[p. 428]{SzmolyanWechselberger2001} called such \textit{pseudo singular points}, \textit{folded singularities}. In a recent publication entitled ``A propos de canards'' Wechselberger \cite[p. 3295]{Wechselberger2012} proposed to define such singularities for $n$-dimensional singularly perturbed systems with $k$ \textit{slow} variables and $m$ \textit{fast} as the solutions of the following system:

\begin{equation}
\label{eq10}
\begin{aligned}
& det(D_{\vec{y}} \vec{g}) = 0, \\
& (adj(D_{\vec{y}} \vec{g}) . D_{\vec{x}} \vec{g} . \vec {f} ) \left( {\vec{x}, \vec{y}, 0} \right) = \vec{0},\\
& \vec {g}\left( {\vec{x}, \vec{y}, 0 } \right) = \vec {0}.
\end{aligned}
\end{equation}

Thus, for dimensions higher than three, his concept encompasses that of Arg\'{e}mi. Moreover, Wechselberger \cite[p. 3296]{Wechselberger2012} proved that \textit{folded singularities} form a $(k-2)$-dimensional manifold. Thus, for $k=2$ the \textit{folded singularities} are nothing else than the \textit{pseudo singular points} defined by Arg\'{e}mi \cite{Argemi}. While for $k \geqslant 3$ the \textit{folded singularities} are no more points but a $(k-2)$-dimensional manifold. Moreover, let's notice on the one hand that the original system (\ref{eq1}) includes $n=k+m$ variables and on the other hand, that the system (\ref{eq10}) comprises $p=2m +1$ equations. However, in the particular case $k=m=2$, two equations of the system (\ref{eq10}) are linearly dependent. So, such system only comprises$p=2m=2k$ equations. So, all the variables (unknowns) of system (\ref{eq10}) can be determined. The solutions of this system are called \textit{pseudo singular points}. We will see in the next Sec. 2 that the stability analysis of these \textit{pseudo singular points} will give rise to a condition for the existence of canard solutions in the original system (\ref{eq1}).

\section{Four-dimensional singularly perturbed systems with two fast variables}
\label{Sec2}

A four-dimensional \textit{singularly perturbed dynamical system} (\ref{eq2}) with $k=2$ \textit{slow} variables and $m=2$ \textit{fast} may be written as:

\begin{subequations}
\label{eq11}
\begin{align}
 \dot{x}_1 & = f_1 \left( x_1, x_2, y_1, y_2 \right), \hfill \\
 \dot{x}_2 & = f_2 \left( x_1, x_2, y_1, y_2 \right), \hfill \\
 \varepsilon \dot{y}_1 & = g_1 \left( x_1, x_2, y_1, y_2 \right), \hfill \\
 \varepsilon \dot{y}_2 & = g_2 \left( x_1, x_2, y_1, y_2 \right), \hfill
\end{align}
\end{subequations}

\smallskip

where $\vec{x}= (x_1, x_2)^t \in \mathbb{R}^2$, $\vec{y}= (y_1, y_2) \in \mathbb{R}^2$, $0 < \varepsilon \ll 1$ and the functions $f_i$ and $g_i$ are assumed to be $C^2$ functions of $(x_1, x_2, y_1, y_2)$.

\subsection{Critical Manifold}

The critical manifold equation of system (\ref{eq11}) is defined by setting $\varepsilon = 0$ in Eqs. (11c \& 11d). Thus, we obtain:

\begin{subequations}
\label{eq12}
\begin{align}
& g_1 \left( x_1, x_2, y_1, y_2 \right) = 0, \hfill \\
& g_2 \left( x_1, x_2, y_1, y_2 \right) = 0. \hfill
\end{align}
\end{subequations}

By application of the Implicit Function Theorem, let suppose that we can explicitly express from Eqs. (12a \& 12b), say without loss of generality, $x_1$ and $y_1$ as functions of the others variables:

\begin{subequations}
\label{eq13}
\begin{align}
& x_1 = \phi_1\left( x_2, y_1, y_2 \right), \hfill \\
& y_1 = \phi_2 \left( x_1, x_2, y_2 \right). \hfill
\end{align}
\end{subequations}

\subsection{Constrained system}

The \textit{constrained system} is obtained by equating to zero the time derivative of $g_{1,2} \left(x_1,x_2,y_1,y_2 \right)$:

\begin{subequations}
\label{eq14}
\begin{align}
\frac{dg_1}{dt} & = \frac{\partial g_1}{\partial x_1} \dot{x}_1 + \frac{\partial g_1}{\partial x_2} \dot{x}_2 + \frac{\partial g_1}{\partial y_1} \dot{y}_1 + \frac{\partial g_1}{\partial y_1} \dot{y}_2 = 0 \\
\frac{dg_2}{dt} & = \frac{\partial g_2}{\partial x_1} \dot{x}_1 + \frac{\partial g_2}{\partial x_2} \dot{x}_2 + \frac{\partial g_2}{\partial y_1} \dot{y}_1 + \frac{\partial g_2}{\partial y_1} \dot{y}_2 = 0
\end{align}
\end{subequations}

Eqs. (14a \& 14b) may be written as:

\begin{subequations}
\label{eq15}
\begin{align}
&  \frac{\partial g_1}{\partial y_1} \dot{y}_1 + \frac{\partial g_1}{\partial y_1} \dot{y}_2 = - \left( \frac{\partial g_1}{\partial x_1} \dot{x}_1 + \frac{\partial g_1}{\partial x_2} \dot{x}_2 \right) \\
&  \frac{\partial g_2}{\partial y_1} \dot{y}_1 + \frac{\partial g_2}{\partial y_1} \dot{y}_2 = - \left( \frac{\partial g_2}{\partial x_1} \dot{x}_1 + \frac{\partial g_2}{\partial x_2} \dot{x}_2  \right)
\end{align}
\end{subequations}

By solving the system of two equations (15a \& 15b) with two unknowns $(\dot{y}_1, \dot{y}_2)$ we find:

\begin{subequations}
\label{eq16}
\begin{align}
&  \dot{y}_1 = \dfrac{ - \left( \frac{\partial g_1}{\partial x_1} \dot{x}_1 + \frac{\partial g_1}{\partial x_2} \dot{x}_2 \right) \frac{\partial g_2}{\partial y_2} + \left( \frac{\partial g_2}{\partial x_1} \dot{x}_1 + \frac{\partial g_2}{\partial x_2} \dot{x}_2  \right) \frac{\partial g_1}{\partial y_2} }{det\left[ J_{(y_1,y_2)} \right]}, \\
&  \dot{y}_2 = \dfrac{ - \left( \frac{\partial g_1}{\partial x_1} \dot{x}_1 + \frac{\partial g_1}{\partial x_2} \dot{x}_2 \right) \frac{\partial g_2}{\partial y_1} + \left( \frac{\partial g_2}{\partial x_1} \dot{x}_1 + \frac{\partial g_2}{\partial x_2} \dot{x}_2  \right) \frac{\partial g_1}{\partial y_1} }{det\left[ J_{(y_1,y_2)} \right]}.
\end{align}
\end{subequations}

So, we have the following constrained system:

\begin{equation}
\label{eq17}
\begin{aligned}
 \dot{x}_1 & = f_1 \left( x_1, x_2, y_1, y_2 \right), \hfill \\
 \dot{x}_2 & = f_2 \left( x_1, x_2, y_1, y_2 \right), \hfill \\
 \dot{y}_1 & = \dfrac{ - \left( \frac{\partial g_1}{\partial x_1} \dot{x}_1 + \frac{\partial g_1}{\partial x_2} \dot{x}_2 \right) \frac{\partial g_2}{\partial y_2} + \left( \frac{\partial g_2}{\partial x_1} \dot{x}_1 + \frac{\partial g_2}{\partial x_2} \dot{x}_2  \right) \frac{\partial g_1}{\partial y_2} }{det\left[ J_{(y_1,y_2)} \right]}, \hfill \\
 \dot{y}_2 & = \dfrac{ - \left( \frac{\partial g_1}{\partial x_1} \dot{x}_1 + \frac{\partial g_1}{\partial x_2} \dot{x}_2 \right) \frac{\partial g_2}{\partial y_1} + \left( \frac{\partial g_2}{\partial x_1} \dot{x}_1 + \frac{\partial g_2}{\partial x_2} \dot{x}_2  \right) \frac{\partial g_1}{\partial y_1} }{det\left[ J_{(y_1,y_2)} \right]}, \hfill \\
 0 & = g_1 \left( x_1, x_2, y_1, y_2 \right), \hfill \\
0 & = g_2 \left( x_1, x_2, y_1, y_2 \right). \hfill
\end{aligned}
\end{equation}

\subsection{Normalized slow dynamics}

By rescaling the time by setting $t = - det\left[ J_{(y_1,y_2)} \right] \tau$ we obtain the ``normalized slow dynamics'':

\begin{equation}
\label{eq18}
\begin{aligned}
 \dot{x}_1 & = - f_1 \left( x_1, x_2, y_1, y_2 \right) det\left[ J_{(y_1,y_2)} \right] = F_1 \left( x_1, x_2, x_3, y_1 \right), \hfill \\
 \dot{x}_2 & = - f_2 \left( x_1, x_2, y_1, y_2 \right) det\left[ J_{(y_1,y_2)} \right] = F_2 \left( x_1, x_2, x_3, y_1 \right), \hfill \\
 \dot{y}_1 & = \left( \frac{\partial g_1}{\partial x_1} \dot{x}_1 + \frac{\partial g_1}{\partial x_2} \dot{x}_2 \right) \frac{\partial g_2}{\partial y_2} - \left( \frac{\partial g_2}{\partial x_1} \dot{x}_1 + \frac{\partial g_2}{\partial x_2} \dot{x}_2  \right) \frac{\partial g_1}{\partial y_2} \hfill \\
 & = G_1 \left( x_1, x_2, x_3, y_1 \right), \hfill \\
 \dot{y}_2 & = \left( \frac{\partial g_1}{\partial x_1} \dot{x}_1 + \frac{\partial g_1}{\partial x_2} \dot{x}_2 \right) \frac{\partial g_2}{\partial y_1} - \left( \frac{\partial g_2}{\partial x_1} \dot{x}_1 + \frac{\partial g_2}{\partial x_2} \dot{x}_2  \right) \frac{\partial g_1}{\partial y_1} \hfill \\
 & = G_2 \left( x_1, x_2, x_3, y_1 \right), \hfill \\
 0 & = g_1 \left( x_1, x_2, y_1, y_2 \right), \hfill \\
0 & = g_2 \left( x_1, x_2, y_1, y_2 \right). \hfill
\end{aligned}
\end{equation}

where the overdot now denotes the time derivation with respect to $\tau$.

\subsection{Desingularized system on the critical manifold}

Then, since we have supposed that $x_1$ and $y_1$ may be explicitly expressed as functions of the others variables (13a \& 13b), they can be used to project the normalized slow dynamics (\ref{eq18}) on the tangent bundle of the critical manifold. So, we have:

\begin{equation}
\label{eq19}
\begin{aligned}
 \dot{x}_2 & = - f_2 \left( x_1, x_2, y_1, y_2 \right) det\left[ J_{(y_1,y_2)} \right] = F_2 \left( x_2, y_2 \right), \hfill \\
 \dot{y}_2 & = ( \frac{\partial g_1}{\partial x_1} \dot{x}_1 + \frac{\partial g_1}{\partial x_2} \dot{x}_2 ) \frac{\partial g_2}{\partial y_1} - ( \frac{\partial g_2}{\partial x_1} \dot{x}_1 + \frac{\partial g_2}{\partial x_2} \dot{x}_2  ) \frac{\partial g_1}{\partial y_1} = G_2 \left( x_2, y_2 \right). \hfill
\end{aligned}
\end{equation}

\smallskip

\subsection{Pseudo singular points}

\textit{Pseudo-singular points} are defined as singular points of the ``normalized slow dynamics'', \textit{i.e.} as the set of points for which we have:

\begin{subequations}
\label{eq20}
\begin{align}
 & det\left[ J_{(y_1,y_2)} \right] = 0, \hfill \\
 & \left( \frac{\partial g_1}{\partial x_1} \dot{x}_1 + \frac{\partial g_1}{\partial x_2} \dot{x}_2 \right) \frac{\partial g_2}{\partial y_2} - \left( \frac{\partial g_2}{\partial x_1} \dot{x}_1 + \frac{\partial g_2}{\partial x_2} \dot{x}_2  \right) \frac{\partial g_1}{\partial y_2} = 0, \hfill \\
 & \left( \frac{\partial g_1}{\partial x_1} \dot{x}_1 + \frac{\partial g_1}{\partial x_2} \dot{x}_2 \right) \frac{\partial g_2}{\partial y_1} - \left( \frac{\partial g_2}{\partial x_1} \dot{x}_1 + \frac{\partial g_2}{\partial x_2} \dot{x}_2  \right) \frac{\partial g_1}{\partial y_1} = 0, \hfill \\
 & g_1 \left( x_1, x_2, y_1, y_2 \right) = 0, \hfill \\
 & g_2 \left( x_1, x_2, y_1, y_2 \right) = 0. \hfill
\end{align}
\end{subequations}

\begin{remark}
Let's notice on the one hand that Eqs. (20b) \& (20c) are linearly dependent and on the other hand that contrary to previous works we don't use the ``desingularized vector field'' (\ref{eq19}) but the ``normalized slow dynamics'' (\ref{eq18}).
\end{remark}

The Jacobian matrix of system (\ref{eq18}) reads:

\begin{equation}
\label{eq21}
J_{(F_1, F_2, G_1, G_2)} = \begin{pmatrix}
\dfrac{\partial F_1}{\partial x_1 } \quad &  \quad \dfrac{\partial F_1}{\partial x_2 } \quad &  \quad \dfrac{\partial F_1}{\partial y_1 } \quad &  \quad \dfrac{\partial F_1}{\partial y_2 } \vspace{6pt} \\
\dfrac{\partial F_2}{\partial x_1 } \quad &  \quad \dfrac{\partial F_2}{\partial x_2 } \quad &  \quad \dfrac{\partial F_2}{\partial y_1 } \quad &  \quad \dfrac{\partial F_2}{\partial y_2 } \vspace{6pt} \\
\dfrac{\partial G_1}{\partial x_1 } \quad &  \quad \dfrac{\partial G_1}{\partial x_2 } \quad &  \quad \dfrac{\partial G_1}{\partial y_1 } \quad &  \quad \dfrac{\partial G_1}{\partial y_2 } \vspace{6pt} \\
\dfrac{\partial G_2}{\partial x_1 } \quad &  \quad \dfrac{\partial G_2}{\partial x_2 } \quad &  \quad \dfrac{\partial G_2}{\partial y_1 } \quad &  \quad \dfrac{\partial G_2}{\partial y_2 } \vspace{6pt}
\end{pmatrix}
\end{equation}

\subsection{Extension of Beno\^{i}t's generic hypothesis}

Without loss of generality, it seems reasonable to extend Beno\^{i}t's generic hypotheses introduced for the three-dimensional case to the four-dimensional case.
So, first, let's suppose that by a ``standard translation'' the \textit{pseudo singular point} can be shifted at the origin $O(0,0,0,0)$ and that by a ``standard rotation'' of $y_1$-axis that the \textit{slow manifold} is tangent to ($x_2, x_3, y_1$)-hyperplane, so we have

\begin{equation}
\label{eq22}
\begin{aligned}
& f_1 \left(0, 0, 0, 0 \right) = g_1 \left( 0, 0, 0, 0 \right) = 0 \hfill \\
& \left.  \dfrac{\partial g_1}{\partial x_2} \right|_{(0,0,0,0)} = \left.  \dfrac{\partial g_1}{\partial x_3} \right|_{(0,0,0,0)} = \left.  \dfrac{\partial g_1}{\partial y_1} \right|_{(0,0,0, 0)} = 0
\end{aligned}
\end{equation}

\smallskip

Then, let's make the following assumptions for the non-degeneracy of the \textit{folded singularity}:

\begin{equation}
\label{eq23}
f_2 \left(0, 0, 0, 0 \right) \neq 0 \quad \mbox{ ; } \quad  \left.  \dfrac{\partial g_1}{\partial x_1} \right|_{(0,0,0,0)} \neq 0 \quad \mbox{ ; } \quad \left.  \dfrac{\partial^2 g_1}{\partial y_1^2} \right|_{(0,0,0,0)} \neq 0.
\end{equation}

According to these generic hypotheses Eqs. (\ref{eq22}-\ref{eq23}), the Jacobian matrix (\ref{eq21}) reads:

\begin{equation}
\label{eq24}
J_{(F_1, F_2, G_1, G_2)} = \begin{pmatrix}
0 &  0  &  0  &  0  \vspace{6pt} \\
-f_2 \dfrac{\partial P}{\partial x_1}  &  -f_2 \dfrac{\partial P}{\partial x_2}   &  -f_2 \dfrac{\partial P}{\partial y_1}   &   - f_2 \dfrac{\partial P}{\partial y_2} \vspace{6pt} \\
a_{31}  &  a_{32}   &  a_{33}   &   a_{34} \vspace{6pt} \\
a_{41}  &  a_{42}   &   a_{43}  &  a_{44} \vspace{6pt}
\end{pmatrix}
\end{equation}

where

\[
\begin{aligned}
P & = det\left[ J_{(y_1,y_2)} \right], \hfill \\
a_{3i} & = -f_2 \dfrac{\partial g_2}{\partial x_2} \dfrac{\partial^2 g_1}{\partial y_2 \partial x_i} + \dfrac{\partial g_2}{\partial y_2} \left( f_2 \dfrac{\partial^2 g_1}{\partial x_2 \partial x_i} + \dfrac{\partial g_1}{\partial x_1} \dfrac{\partial f_1}{\partial x_i} \right) \mbox{ for } i =1,2,\\
a_{3i} & = -f_2 \dfrac{\partial g_2}{\partial x_2} \dfrac{\partial^2 g_1}{\partial y_2 \partial y_i} + \dfrac{\partial g_2}{\partial y_2} \left( f_2 \dfrac{\partial^2 g_1}{\partial x_2 \partial y_i} + \dfrac{\partial g_1}{\partial x_1} \dfrac{\partial f_1}{\partial y_i} \right) \mbox{ for } i =3,4,\\
a_{4i} & = f_2 \dfrac{\partial g_2}{\partial x_2} \dfrac{\partial^2 g_1}{\partial y_1 \partial x_i} - \dfrac{\partial g_2}{\partial y_1} \left( f_2 \dfrac{\partial^2 g_1}{\partial x_2 \partial x_i} + \dfrac{\partial g_1}{\partial x_1} \dfrac{\partial f_1}{\partial x_i} \right) \mbox{ for } i =1,2,\\
a_{4i} & = f_2 \dfrac{\partial g_2}{\partial x_2} \dfrac{\partial^2 g_1}{\partial y_1 \partial y_i} - \dfrac{\partial g_2}{\partial y_1} \left( f_2 \dfrac{\partial^2 g_1}{\partial x_2 \partial y_i} + \dfrac{\partial g_1}{\partial x_1} \dfrac{\partial f_1}{\partial y_i} \right) \mbox{ for } i =3,4.\\
\end{aligned}
\]

Thus, we have the following Cayley-Hamilton eigenpolynomial associated with such a Jacobian matrix (\ref{eq24}) evaluated at the \textit{pseudo singular point}, \textit{i.e.}, at the origin:

\begin{equation}
\label{eq25}
\lambda^4 - \sigma_1 \lambda^3 + \sigma_2 \lambda^2 - \sigma_3 \lambda + \sigma_4 = 0
\end{equation}

where $\sigma_1 = Tr(J)$ is the sum of all first-order diagonal minors of $J$, \textit{i.e.}, the the trace of the Jacobian matrix $J$, $\sigma_2$ represents the sum of all second-order diagonal minors of $J$ and $\sigma_3$ represents the sum of all third-order diagonal minors of $J$. It appears that $\sigma_4 = |J|=0$ since one row of the Jacobian matrix (\ref{eq24}) is null. So, the eigenpolynomial reduces to:

\begin{equation}
\label{eq26}
\lambda\left( \lambda^3 - \sigma_1 \lambda^2 + \sigma_2 \lambda - \sigma_3 \right) = 0
\end{equation}

But, according to Wechselberger \cite{Wechselberger2012}, $\sigma_3$ vanishes at a \textit{pseudo singular point} as it's easy to prove it. So, the eigenpolynomial (\ref{eq26}) is reduced to

\begin{equation}
\label{eq27}
\lambda^2\left( \lambda^2 - \sigma_1 \lambda + \sigma_2 \right) = 0
\end{equation}

Let $\lambda_i$ be the eigenvalues of the eigenpolynomial (\ref{eq27}) and let's denote by $\lambda_{3,4} = 0$ the obvious double root of this polynomial. We have:

\begin{equation}
\label{eq28}
\begin{aligned}
\sigma_1 & = Tr(J_{(F_1, F_2, G_1, G_2)} ) = \lambda_1 + \lambda_2 = \dfrac{\partial g_2}{\partial x_1}\dfrac{\partial g_1}{\partial y_1}\dfrac{\partial f_1}{\partial y_2}, \hfill  \\
\sigma_2 & = \sum_{i=1}^3 \left| J_{(F_1, F_2, G_1, G_2)}^{ii} \right| = \lambda_1\lambda_2 \hfill \\
& = \left( \dfrac{\partial g_1}{\partial y_1}\right)^2 \left[ f_2^2 \left( \dfrac{\partial^2 g_2}{\partial x_2^2 } \dfrac{\partial^2 g_2}{\partial y_2^2} -  \left( \dfrac{\partial^2 g_2}{\partial x_2 \partial y_2} \right)^2 \right) \right. \hfill \\
& \left. + f_2 \dfrac{\partial g_2}{\partial x_1}\left( \dfrac{\partial^2 g_2}{\partial y_2^2 } \dfrac{\partial f_1}{\partial x_2} - \dfrac{\partial^2 g_2}{\partial x_2 \partial y_2}\dfrac{\partial f_1}{\partial y_2} \right) \right]
\end{aligned}
\end{equation}

where $\sigma_1 = Tr(J_{(F_1, F_2, G_1, G_2)} ) = p$ is is the sum of all first-order diagonal minors of $J_{(F_1, F_2, G_1, G_2)}$, \textit{i.e.} the trace of the Jacobian matrix $J_{(F_1, F_2, G_1, G_2)} $ and $\sigma_2 = \sum_{i=1}^3 \left| J_{(F_1, F_2, G_1, G_2)}^{ii} \right| = q$ represents the sum of all second-order diagonal minors of $J_{(F_1, F_2, G_1, G_2)} $. Thus, the \textit{pseudo singular point} is of saddle-type \textit{iff} the following conditions $C_1$ and $C_2$ are verified:

\begin{equation}
\label{eq29}
\begin{aligned}
C_1:& \quad \Delta = p^2 - 4q > 0, \hfill \\
C_2:& \quad q < 0.
\end{aligned}
\end{equation}

Condition $C_1$ is systematically satisfied provided that condition $C_2$ is verified. Thus, the \textit{pseudo singular point} is of saddle-type \textit{iff} $q < 0$.

\subsection{Canard existence in $\mathbb{R}^{2+2}$}

Following the works of Wechselberger \cite{Wechselberger2012} it can be stated, while using a standard polynomial change of variables, that any $n$-dimensional singularly perturbed systems with $k$ \textit{slow} variables ($k \geqslant 2$) and $m$ \textit{fast} ($m \geqslant 1$) (\ref{eq1}) can be transformed into the following ``normal form'':

\begin{equation}
\label{eq30}
\begin{aligned}
\dot{x_1} & = \tilde{a} x_2 + \tilde{b} y_2 + O \left( x_1, \epsilon, x_2^2, x_2 y_2, y_2^2 \right), \hfill \\
\dot{x_2} & = 1 + O \left( x_1, x_2, y_2, \epsilon \right), \hfill \\
\epsilon \dot{y_1} & = \tilde{c} y_1 + O \left(\epsilon x_1, \epsilon x_2, \epsilon y_2, x_1^2, x_2^2, y_2^2, x_2 y_2  \right), \hfill \\
\epsilon \dot{y_2} & =  -\left( x_1 + y_2^2 \right) + O \left( \epsilon x_1, \epsilon x_2, \epsilon y_2, \epsilon^2, x_1^2 y_2, y_2^3, x_1 x_2 y_2  \right). \hfill
\end{aligned}
\end{equation}

We establish in Appendix A for any four-dimensional singularly perturbed systems (\ref{eq11}) with $k=2$ \textit{slow} and $m=2$ \textit{fast} variables that

\[
\begin{aligned}
\tilde{a} & = \frac{1}{2} \left[ f_2^2 \left( \dfrac{\partial^2 g_2}{\partial x_2^2 } \dfrac{\partial^2 g_2}{\partial y_2^2} - \left( \dfrac{\partial^2 g_2}{\partial x_2 \partial y_2} \right)^2 \right) + f_2 \dfrac{\partial g_2}{\partial x_1} \left( \dfrac{\partial^2 g_2}{\partial y_2^2 }\dfrac{\partial f_1}{\partial x_2} - \dfrac{\partial^2 g_2}{\partial x_2 \partial y_2 } \dfrac{\partial f_1}{\partial y_2} \right) \right] \hfill \\
\tilde{b} & = - \dfrac{\partial g_2}{\partial x_1}\dfrac{\partial f_1}{\partial y_2}, \hfill  \\
\tilde{c} & = \dfrac{\partial g_1}{\partial y_1}. \hfill
\end{aligned}
\]

Thus, in his paper Wechselberger \cite[p. 3304]{Wechselberger2012} provided in the framework of ``standard analysis'' a generalization of Beno\^{i}t's theorem \cite{Benoit1983} for any $n$-dimensional singularly perturbed systems with $k$ \textit{slow} variables ($k \geqslant 2$) and $m$ \textit{fast} ($m \geqslant 1$). According to his Theorem 4.1 presented below he proved the existence of canard solutions for the original system (\ref{eq1}).

\begin{theorem}\hfill \\
\label{theo2}
In the folded saddle case of system {\rm (30)} singular canards perturb to maximal canards solutions for sufficiently small $\varepsilon \ll 1$.
\end{theorem}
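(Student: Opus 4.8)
The plan is to reduce the four–dimensional problem (\ref{eq30}) to a classical three–dimensional fold/canard problem by slaving the extra fast direction, and then to invoke the blow--up analysis of Szmolyan and Wechselberger \cite{SzmolyanWechselberger2001} for the folded saddle. The starting observation is that in the normal form (\ref{eq30}) the fast variable $y_1$ is normally hyperbolic and decoupled at leading order: its layer equation is $\varepsilon \dot{y}_1 = \tilde{c}\, y_1 + \cdots$ with $\tilde{c} = \partial g_1/\partial y_1 \neq 0$, since $y_1$ is a genuine fast variable. Hence the component $\{y_1 = 0\}$ of the critical manifold is normally hyperbolic, and Fenichel theory \cite{Fen1971,Fen1979} provides a locally invariant manifold $y_1 = h(x_1,x_2,y_2,\varepsilon)$ with $h = O(\varepsilon)$. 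Because the $O(\cdot)$ remainders in the $\dot{x}_1$, $\dot{x}_2$ and $\varepsilon\dot{y}_2$ equations of (\ref{eq30}) do not involve $y_1$, restricting the flow to this manifold yields a three–dimensional singularly perturbed system in $(x_1,x_2,y_2)$,
\begin{equation*}
\dot{x}_1 = \tilde{a} x_2 + \tilde{b} y_2 + \cdots,\qquad \dot{x}_2 = 1 + \cdots,\qquad \varepsilon \dot{y}_2 = -(x_1 + y_2^2) + \cdots,
\end{equation*}
which is exactly the standard fold normal form of \cite{SzmolyanWechselberger2001}. Since maximal canards on the reduced manifold lift to maximal canards of the full system, it suffices to establish the result for this reduction.

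Next I would analyse the reduced slow flow on the critical manifold $x_1 = -y_2^2$, parametrised by $(x_2,y_2)$. Differentiating the constraint and using $\dot{x}_2 = 1$ gives $-2y_2\,\dot{y}_2 = \tilde{a}x_2 + \tilde{b}y_2$, and after the desingularising time rescaling $dt = -2y_2\,d\tau$ one obtains the planar vector field $x_2' = -2y_2$, $y_2' = \tilde{a}x_2 + \tilde{b}y_2$, whose unique equilibrium is the pseudo singular point at the origin. Its linearisation has determinant $2\tilde{a}$ and trace $\tilde{b}$, so the equilibrium is a saddle precisely when $\tilde{a}<0$, in agreement with condition $C_2$ of (\ref{eq29}). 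In this folded saddle case the two separatrices cross the fold line $\{y_2=0\}$ transversally, and the separatrix passing from the attracting sheet to the repelling sheet is the singular canard whose perturbation is to be proved.

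To establish persistence for $0<\varepsilon\ll 1$, I would resolve the loss of normal hyperbolicity at the fold by a weighted blow--up, for instance $x_1 = r^2 \bar{x}_1$, $x_2 = r \bar{x}_2$, $y_2 = r \bar{y}_2$, $\varepsilon = r^2 \bar{\varepsilon}$, and study the blown--up vector field chart by chart following \cite{SzmolyanWechselberger2001,Wechselberger2012}. In the outer (entry and exit) charts the Fenichel manifolds $M_{a,\varepsilon}$ and $M_{r,\varepsilon}$ enter the blow--up region as normally hyperbolic invariant manifolds; in the central rescaling chart $\bar{\varepsilon}=1$ the dynamics reduces, after the singular scaling, to a Riccati equation governing the passage through the fold. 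For the folded saddle this Riccati flow possesses a distinguished orbit bounded in both time directions, which connects the continuations of $M_{a,\varepsilon}$ and $M_{r,\varepsilon}$. Tracking the two manifolds through the outer charts and matching them in the rescaling chart shows that $M_{a,\varepsilon}\cap M_{r,\varepsilon}$ is non--empty and transversal; this transversal intersection is by definition a maximal canard, whose singular limit $\varepsilon\to 0$ is the singular canard of the preceding step.

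The hard part is this third step: the Fenichel manifolds cannot be continued naively across the fold line, and the blow--up is precisely the device restoring hyperbolicity there, so that the two sheets become extendable invariant manifolds whose intersection can be controlled. Producing the transversal connection requires the detailed estimates on the Riccati (Weber) equation and the manifold matching across charts carried out in \cite{SzmolyanWechselberger2001}. The reduction performed in the first step is what makes this applicable verbatim: it guarantees that the additional fast coordinate $y_1$ contributes only normally hyperbolic, $O(\varepsilon)$ corrections and therefore does not interfere with the fold passage or with the transversality of $M_{a,\varepsilon}\cap M_{r,\varepsilon}$.
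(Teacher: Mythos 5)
The paper does not actually prove this statement: Theorem \ref{theo2} is imported verbatim from Wechselberger \cite{Wechselberger2012}, and the paper's entire ``proof'' is the one-line citation ``See Wechselberger \cite{Wechselberger2012}.'' Your sketch is therefore doing genuinely more work than the paper, and it is a faithful reconstruction of the strategy of the cited source: since $\tilde{c}=\partial g_1/\partial y_1\neq 0$, the extra fast variable $y_1$ is normally hyperbolic and can be slaved by Fenichel theory, collapsing (\ref{eq30}) to the three-dimensional folded-saddle normal form of \cite{SzmolyanWechselberger2001}; the desingularized slow flow on $x_1=-y_2^2$ then has linearization with determinant $2\tilde{a}$ and trace $\tilde{b}$, so the saddle condition is $\tilde{a}<0$, consistent with the paper's $\sigma_2=2\tilde{a}\tilde{c}^2<0$ in Proposition \ref{prop1} and with (\ref{eq29}); finally the blow-up/Riccati analysis of \cite{SzmolyanWechselberger2001} delivers the transversal intersection $M_{a,\varepsilon}\cap M_{r,\varepsilon}$. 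Two remarks. First, as displayed in (\ref{eq30}) the remainders of the $x_1$, $x_2$ and $y_2$ equations contain no $y_1$-terms at all, so that subsystem is already closed and the Fenichel slaving step is, strictly speaking, only needed for the generic higher-order $y_1$-dependence suppressed in the normal form; invoking it anyway makes your reduction robust rather than wrong. Second, your argument remains a sketch at exactly the point you identify as hard: the transversality of the connection through the rescaling chart is delegated to the estimates of \cite{SzmolyanWechselberger2001} rather than carried out. That is an acceptable level of rigor here --- indeed it is strictly more than the paper provides --- but you should state explicitly that the hypotheses of the three-dimensional theorem (in particular the nondegeneracy conditions ensuring the fold is a generic fold and the eigenvalue ratio is well defined) are met after the reduction, since that is the only content the citation does not automatically supply.
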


\begin{proof}
See Wechselberger \cite{Wechselberger2012}.
\end{proof}

Since our method doesn't use the ``desingularized vector field'' (\ref{eq19}) but the ``normalized slow dynamics'' (\ref{eq18}), we have the following proposition:

\begin{proposition}\hfill \\
\label{prop1}
If the normalized slow dynamics {\rm (\ref{eq18})} has a \textit{pseudo singular point} of saddle type, \textit{i.e.} if the sum $\sigma_2$ of all second-order diagonal minors of the Jacobian matrix of the normalized slow dynamics {\rm (\ref{eq18})} evaluated at the \textit{pseudo singular point} is negative, \textit{i.e.} if $\sigma_2 < 0$ then, according to Theorem \ref{theo2}, system {\rm (\ref{eq11})} exhibits a canard solution which evolves from the attractive part of the slow manifold towards its repelling part.

\end{proposition}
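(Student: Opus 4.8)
The plan is to reduce the proposition to a direct application of Theorem~\ref{theo2} by identifying the saddle-type condition on the pseudo singular point with the folded-saddle condition $\tilde{a} < 0$ that the theorem requires. First I would recall the spectral picture already obtained: under the generic hypotheses (\ref{eq22})--(\ref{eq23}) the Jacobian (\ref{eq24}) of the normalized slow dynamics at the pseudo singular point has a null row, forcing $\sigma_4 = 0$, and $\sigma_3$ also vanishes there, so the characteristic polynomial collapses to (\ref{eq27}). Two eigenvalues are therefore zero, and the remaining pair $\lambda_1, \lambda_2$ satisfies $\lambda_1 + \lambda_2 = \sigma_1 = p$ and $\lambda_1 \lambda_2 = \sigma_2 = q$. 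A saddle is characterized by two real eigenvalues of opposite sign, which is equivalent to $q < 0$; in that case the discriminant condition $C_1 : p^2 - 4q > 0$ holds automatically. Hence the pseudo singular point is of saddle type if and only if $\sigma_2 < 0$.

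The heart of the argument is a sign comparison. I would set the expression (\ref{eq28}) for $\sigma_2$ side by side with the expression for $\tilde{a}$ established in Appendix~A. Both are built from exactly the same bracketed combination of $f_2$, $\partial g_2 / \partial x_1$ and the second partials of $g_2$ and $f_1$; they differ only in the prefactor, $(\partial g_1 / \partial y_1)^2$ in the case of $\sigma_2$ against $\tfrac{1}{2}$ in the case of $\tilde{a}$. Consequently
\begin{equation*}
\sigma_2 = 2 \left( \dfrac{\partial g_1}{\partial y_1} \right)^2 \tilde{a} = 2\, \tilde{c}^2\, \tilde{a}.
\end{equation*}
Since $\tilde{c} = \partial g_1 / \partial y_1 \neq 0$ by the non-degeneracy assumptions, the factor $\tilde{c}^2$ is strictly positive, so $\sigma_2$ and $\tilde{a}$ carry the same sign and $\sigma_2 < 0 \Longleftrightarrow \tilde{a} < 0$.

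It then remains to close the loop through the normal form. By Wechselberger's classification of folded singularities for (\ref{eq30}), the inequality $\tilde{a} < 0$ is precisely the folded-saddle case. Invoking Theorem~\ref{theo2}, the singular canards in this case perturb to maximal canards for all sufficiently small $\varepsilon \ll 1$, and such a maximal canard is a solution of (\ref{eq11}) that follows the attracting branch $M_{a,\varepsilon}$ of the slow manifold before crossing to the repelling branch $M_{r,\varepsilon}$ --- the asserted canard solution. The one genuinely computational step is the middle one: verifying that the bracket in $\sigma_2$ is literally the bracket in $\tilde{a}$, so that the two quantities differ only by the manifestly positive factor $2 \tilde{c}^2$. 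Granting the Jacobian-minor computation behind (\ref{eq28}) and the normal-form reduction of Appendix~A, this identity is immediate and the proposition follows with no further analysis.
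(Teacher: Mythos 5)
Your proposal is correct and follows essentially the same route as the paper's own proof: both reduce the statement to the identity $\sigma_2 = 2\tilde{a}\tilde{c}^2$ (together with $\sigma_1 = -\tilde{b}\tilde{c}$), conclude that $\sigma_2 < 0$ is equivalent to the folded-saddle condition $\tilde{a} < 0$ of the normal form (\ref{eq30}), and then invoke Theorem \ref{theo2}. The only difference is that you make explicit the requirement $\tilde{c} = \partial g_1/\partial y_1 \neq 0$ (without which $\sigma_2$ would vanish and the equivalence would fail), a hypothesis the paper uses implicitly but does not list among its non-degeneracy conditions (\ref{eq23}).
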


\begin{proof}

By making some smooth changes of time and smooth changes of coordinates (see Appendix A) we brought the system (\ref{eq11}) to the following ``normal form'':

\[
\begin{aligned}
\dot{x_1} & = \tilde{a} x_2 + \tilde{b} y_2 + O \left( x_1, \epsilon, x_2^2, x_2 y_2, y_2^2 \right), \hfill \vspace{6pt} \\
\dot{x_2} & = 1 + O \left( x_1, x_2, y_2, \epsilon \right), \hfill \vspace{6pt} \\
\epsilon \dot{y_1} & = \tilde{c} y_1 + O \left(\epsilon x_1, \epsilon x_2, \epsilon y_2, x_1^2, x_2^2, y_2^2, x_2 y_2  \right), \hfill \vspace{6pt} \\
\epsilon \dot{y_2} & =  -\left( x_1 + y_2^2 \right) + O \left( \epsilon x_1, \epsilon x_2, \epsilon y_2, \epsilon^2, x_1^2 y_2, y_2^3, x_1 x_2 y_2  \right), \hfill
\end{aligned}
\]

Then, we deduce that the condition for the \textit{pseudo singular point} to be of saddle type is $\tilde{a} < 0$. According to Eqs. (\ref{eq29}) it is easy to verify that

\[
\begin{aligned}
\sigma_1 & = Tr(J_{(F_1, F_2, G_1, G_2)}) = \lambda_1 + \lambda_2 = -\tilde{b}\tilde{c}, \hfill  \\
\sigma_2 & = \sum_{i=1}^3 \left| J_{(F_1, F_2, G_1, G_2)}^{ii} \right| = \lambda_1\lambda_2 = 2\tilde{a}\tilde{c}^2.
\end{aligned}
\]

\smallskip

So, the condition for which the \textit{pseudo singular point} is of saddle type, \textit{i.e.} $\sigma_2 < 0$ is identical to that proposed by Wechselberger \cite[p. 3298]{Wechselberger2012} in his theorem, \textit{i.e.} $\tilde{a} < 0$.
\end{proof}

So, Prop. \ref{prop1} can be used to state the existence of canard solution for such systems. Application of Proposition \ref{prop1} to the coupled FitzHugh-Nagumo equations, presented in Sec. 4, which is a four-dimensional singularly perturbed system with two \textit{slow} and two \textit{fast} variables will enable to prove, as many previous works such as those of Tchizawa \& Campbell \cite{Tchizawa2002} and Tchizawa \cite{Tchizawa2002,Tchizawa2004,Tchizawa2007,Tchizawa2008,Tchizawa2010,Tchizawa2012}, the existence of ``canard solutions'' in such system. According to Tchizawa \cite{Tchizawa2014}, it is very important to notice, on the one hand that the fast equation has 2-dimensional in the system $\mathbb{R}^{2+2}$ and, on the other hand that the fast system can give attractive, repulsive or attractive-repulsive at each \textit{pseudo singular point}. Then, Tchizawa \cite{Tchizawa2014} has established that the jumping direction can be shown using the eigenvectors. In the same way we will find again the results of Rubin \textit{et al.} \cite{Rubin} concerning the existence of ``canard solutions'' in the Hodgkin-Huxley model but with a set of more realistic parameters used in Chua \textit{et al.} \cite{ChuaSbitnev2012a,ChuaSbitnev2012b}.

\newpage

\section{Coupled FitzHugh-Nagumo equations}
\label{Sec4}

The FitzHugh-Nagumo model \cite{Fitzhugh,Nagumo} is a simplified version of the Hodgkin-Huxley model \cite{Hodgkin} which models in a detailed manner activation and deactivation dynamics of a spiking neuron. By coupling two FitzHugh-Nagumo models Tchizawa \& Campbell \cite{TchiCamp} and Tchizawa \cite{Tchizawa2002,Tchizawa2012} obtained the following four-dimensional singularly perturbed system with two \textit{slow} and two \textit{fast} variables:

\begin{subequations}
\label{eq31}
\begin{align}
\dfrac{dx_1}{dt} & = \dfrac{1}{c} \left( y_1 + b x_1 \right), \hfill \\
\dfrac{dx_2}{dt} & = \dfrac{1}{c} \left( y_2 + b x_2 \right), \hfill \\
\varepsilon \dfrac{dy_1}{dt} & = x_1 - \frac{y_1^3}{3} + y_2, \hfill \\
\varepsilon \dfrac{dy_2}{dt} & = x_2 - \frac{y_2^3}{3} + y_1.
\end{align}
\end{subequations}

\smallskip

where $0 < \varepsilon \ll 1$ and $b$ is the ``canard parameter'' or ``duck parameter'' while $c$ is a scale factor.

\subsection{Slow manifold and contrained system}

The slow manifold equation of system (\ref{eq31}) is defined by setting $\varepsilon = 0$ in Eqs. (31c \& 31d). Thus, we obtain:

\begin{equation}
\label{eq32}
\begin{aligned}
\dfrac{dx_1}{dt} & = \dfrac{1}{c} \left( y_1 + b x_1 \right), \hfill \vspace{6pt} \\
\dfrac{dx_2}{dt} & = \dfrac{1}{c} \left( y_2 + b x_2 \right), \hfill \vspace{6pt} \\
\dfrac{dy_1}{dt} & = - \dfrac{ \dfrac{1}{c} \left( y_2 + b x_2 \right) + \dfrac{y_2^2}{c} \left( y_1 + b x_1 \right)  }{  y_1^2 y_2^2 -1 }, \hfill \vspace{6pt} \\
\dfrac{dy_2}{dt} & = - \dfrac{ \dfrac{1}{c} \left( y_1 + b x_1 \right) + \dfrac{y_1^2}{c} \left( y_2 + b x_2 \right)  }{  y_1^2 y_2^2 -1 }, \hfill \vspace{6pt} \\
0 & = x_1 - \frac{y_1^3}{3} + y_2, \vspace{6pt} \\
0 & = x_2 - \frac{y_2^3}{3} + y_1.
\end{aligned}
\end{equation}

\subsection{Normalized slow dynamics}

Then, by rescaling the time by setting $t = - det\left[ J_{(y_1,y_2)} \right] \tau = - (y_1^2 y_2^2 -1) $ we obtain the ``normalized slow dynamics'':

\begin{equation}
\label{eq33}
\begin{aligned}
\dfrac{dx_1}{dt} & = - \dfrac{1}{c} \left( y_1 + b x_1 \right) \left( y_1^2 y_2^2 - 1 \right)= F_1 \left( x_1, x_2, y_1, y_2 \right), \hfill \vspace{6pt} \\
\dfrac{dx_2}{dt} & = - \dfrac{1}{c} \left( y_2 + b x_2 \right) \left( y_1^2 y_2^2 - 1 \right) = F_2 \left( x_1, x_2, y_1, y_2 \right), \hfill \vspace{6pt} \\
\dfrac{dy_1}{dt} & = \dfrac{1}{c} \left( y_2 + b x_2 \right) + \dfrac{y_2^2}{c} \left( y_1 + b x_1 \right) = G_1 \left( x_1, x_2, y_1, y_2 \right), \hfill \vspace{6pt} \\
\dfrac{dy_2}{dt} & = \dfrac{1}{c} \left( y_1 + b x_1 \right) + \dfrac{y_1^2}{c} \left( y_2 + b x_2 \right) G_2 \left( x_1, x_2, y_1, y_2 \right), \hfill \vspace{6pt} \\
0 & = x_1 - \frac{y_1^3}{3} + y_2, \vspace{6pt} \\
0 & = x_2 - \frac{y_2^3}{3} + y_1.
\end{aligned}
\end{equation}

\subsection{Pseudo singular points}

From Eqs. (\ref{eq20}), the \textit{pseudo-singular points} of system (\ref{eq31}) are defined by:

\begin{subequations}
\label{eq34}
\begin{align}
& det\left[ J_{(y_1,y_2)} \right] = y_1^2 y_2^2 - 1 = 0, \hfill \\
& \left( \frac{\partial g_1}{\partial x_1} \dot{x}_1 + \frac{\partial g_1}{\partial x_2} \dot{x}_2 \right) \frac{\partial g_2}{\partial y_2} - \left( \frac{\partial g_2}{\partial x_1} \dot{x}_1 + \frac{\partial g_2}{\partial x_2} \dot{x}_2  \right) \frac{\partial g_1}{\partial y_2} \hfill \\
& = \dfrac{1}{c} \left( y_2 + b x_2 \right) + \dfrac{y_2^2}{c} \left( y_1 + b x_1 \right) = 0, \hfill  \nonumber \\
& \left( \frac{\partial g_1}{\partial x_1} \dot{x}_1 + \frac{\partial g_1}{\partial x_2} \dot{x}_2 \right) \frac{\partial g_2}{\partial y_1} - \left( \frac{\partial g_2}{\partial x_1} \dot{x}_1 + \frac{\partial g_2}{\partial x_2} \dot{x}_2  \right) \frac{\partial g_1}{\partial y_1} \hfill \\
& = \dfrac{1}{c} \left( y_1 + b x_1 \right) + \dfrac{y_1^2}{c} \left( y_2 + b x_2 \right) = 0, \hfill \nonumber \\
& g_1 \left( x_1, x_2, y_1, y_2 \right) = x_1 - \frac{y_1^3}{3} + y_2 = 0, \hfill \\
& g_2 \left( x_1, x_2, y_1, y_2 \right) = x_2 - \frac{y_2^3}{3} + y_1 = 0.
\end{align}
\end{subequations}

According to Tchizawa \& Campbell \cite{TchiCamp} and Tchizawa \cite{Tchizawa2002,Tchizawa2004}, there are six \textit{pseudo singular points}, the last four are depending on the parameter $b$.

\begin{subequations}
\label{eq35}
\begin{align}
\left( \tilde{x}_{1}, \tilde{x}_{2}, \tilde{y}_{1}, \tilde{y}_{2} \right) = & \left( \pm \dfrac{4}{3}, \mp \dfrac{4}{3}, \pm 1, \mp 1 \right), \hfill \\
\left( \tilde{x}_{1}, \tilde{x}_{2}, \tilde{y}_{1}, \tilde{y}_{2} \right) = & \left( \pm \frac{\sqrt{\frac{3-\sqrt{9-4 b^2}}{b}} \left(3+2 \sqrt{9-4 b^2}\right)}{3 \sqrt{2} b} \right., \nonumber \\
& \mp \frac{\sqrt{\frac{3-\sqrt{9-4 b^2}}{b}} \left(9-8 b^2+3 \sqrt{9-4 b^2}\right)}{6 \sqrt{2} b^2}, \nonumber \\
& \mp\sqrt{\frac{3-\sqrt{9-4 b^2}}{2b}}, \left. \mp \frac{\sqrt{2b}}{\sqrt{3-\sqrt{9-4 b^2}}} \right), \hfill \\
\left( \tilde{x}_{1}, \tilde{x}_{2}, \tilde{y}_{1}, \tilde{y}_{2} \right) = & \left( \pm \frac{\left(3-2 \sqrt{9-4 b^2}\right) \sqrt{\frac{3+\sqrt{9-4 b^2}}{b}}}{3 \sqrt{2} b} \right., \nonumber  \\
& \mp \frac{\sqrt{\frac{3+\sqrt{9-4 b^2}}{b}} \left(9 -8 b^2 - 3 \sqrt{9-4 b^2}\right)}{6 \sqrt{2} b^2}, \nonumber\\
& \mp\sqrt{\frac{3-\sqrt{9-4 b^2}}{2b}}, \left. \mp \frac{\sqrt{2b}}{\sqrt{3-\sqrt{9-4 b^2}}} \right).
\end{align}
\end{subequations}

\subsection{Canard existence in coupled FitzHugh-Nagumo equations}

The Jacobian matrix of system (\ref{eq33}) evaluated at the \textit{pseudo singular points} (\ref{eq35}a) reads:

\begin{equation}
\label{eq36}
J_{(F_1, F_2, G_1, G_2)} = \begin{pmatrix}
0 \quad &  0 \quad & \quad &  \dfrac{2 (3+4 b)}{3 c} \quad &  -\dfrac{2 (3+4 b)}{3 c} \vspace{6pt} \\
0 \quad &  0 \quad & \quad &  -\dfrac{2 (3+4 b)}{3 c} \quad &  \dfrac{2 (3+4 b)}{3 c} \vspace{6pt} \\
\dfrac{b}{c} \quad &  \dfrac{b}{c} \quad & \quad &  \dfrac{1}{c} \quad &  -\frac{3+8 b}{3 c}  \vspace{6pt} \\
\dfrac{b}{c} \quad &  \dfrac{b}{c} \quad & \quad &  \quad -\dfrac{3+8 b}{3 c} \quad &  \quad \dfrac{1}{c}  \vspace{6pt}
\end{pmatrix}
\end{equation}

\begin{remark}
Although the \textit{pseudo singular points} have not been shifted at the origin extension of Beno\^{i}t's generic hypotheses (\ref{eq22}-\ref{eq23}) are satisfied. In other words, we have $\sigma_4 = \sigma_3 = 0$.
\end{remark}

According to Eqs. (\ref{eq28}) we find that:

\begin{equation}
\label{eq37}
\begin{aligned}
p & = \sigma_1 = Tr(J) = +\dfrac{2}{c}, \hfill  \\
q & = \sigma_2 = -\dfrac{16 b (3+4 b)}{9 c^2}
\end{aligned}
\end{equation}

\smallskip

Thus, according to Prop. \ref{prop1}, the \textit{pseudo singular points} are of saddle-type if and only if:

\[
 -\dfrac{16 b (3+4 b)}{9 c^2} < 0
\]

So, we have the following conditions $C_1$ and $C_2$:

\begin{equation}
\label{eq38}
\begin{aligned}
C_1:& \quad \Delta = \frac{4 (3 + 8 b)^2}{9 c^2} > 0, \hfill \\
C_2:& \quad q = - \dfrac{16 b (3 + 4 b)}{9 c^2}  < 0.
\end{aligned}
\end{equation}

\smallskip

Let's choose arbitrarily $b$ as the ``canard parameter'' or ``duck parameter''. Obviously, it appears that the condition $C_1$ is still satisfied. Finally, the \textit{pseudo singular points} are of saddle-type if and only if we have:

\begin{equation}
\label{eq39}
b > 0 \quad \mbox{ or } \quad b < - \dfrac{3}{4}.
\end{equation}

\smallskip

\begin{remark}
Let's notice that the \textit{pseudo singular points} are of node-type if $-\dfrac{3}{4} < b < 0$ as stated by Tchizawa \& Campbell \cite{TchiCamp} and Tchizawa \cite{Tchizawa2002,Tchizawa2004}.
\end{remark}

The Jacobian matrix $J_{(F_1, F_2, G_1, G_2)}$ of system (\ref{eq33}) evaluated at the \textit{pseudo singular points} (\ref{eq35}b) reads:

\begin{equation}
\label{eq40}
\begin{pmatrix}
0  &  0  &  &  -\frac{4 \sqrt{9-4 b^2}}{3 c}  &  -\frac{2 \left(-9+4 b^2+3 \sqrt{9-4 b^2}\right)}{3 b c} \vspace{6pt} \\
0  &  0  &  &  \frac{2 \left(-9+4 b^2+3 \sqrt{9-4 b^2}\right)}{3 b c}  &  \frac{4 \sqrt{9-4 b^2}}{3 c} \vspace{6pt} \\
\frac{3+\sqrt{9-4 b^2}}{2 c}  &  \frac{b}{c}  &  &  \frac{3+\sqrt{9-4 b^2}}{2 b c}  &  \frac{3-4 \sqrt{9-4 b^2}}{3 c}  \vspace{6pt} \\
\frac{b}{c}  &  \frac{3-\sqrt{9-4 b^2}}{2 c}  &  &   \frac{3 + 4 \sqrt{9-4 b^2}}{3 c}  &   \frac{3 - \sqrt{9-4 b^2}}{2 b c}  \vspace{6pt}
\end{pmatrix}
\end{equation}

\begin{remark}
Although, the \textit{pseudo singular points} have not been shifted at the origin extension of Beno\^{i}t's generic hypotheses (\ref{eq22}-\ref{eq23}) are satisfied. In other words, we have $\sigma_4 = \sigma_3 = 0$.
\end{remark}

According to Eqs. (\ref{eq28}) we find that:

\begin{equation}
\label{eq41}
\begin{aligned}
p & = \sigma_1 = Tr(J) = + \dfrac{3}{b c}, \hfill  \\
q & = \sigma_2 = \dfrac{16 \left(9-4 b^2\right)}{9 c^2}
\end{aligned}
\end{equation}

\smallskip

Thus, according to Prop. \ref{prop1}, the \textit{pseudo singular points} are of saddle-type if and only if:

\[
\dfrac{16 \left(9-4 b^2\right)}{9 c^2} < 0
\]

\[
\Delta = p^2 - 4q > 0 \qquad \mbox{ and } \qquad q <0.
\]

So, we have the following conditions $C_1$ and $C_2$:

\begin{equation}
\label{eq42}
\begin{aligned}
C_1:& \quad \Delta = \left( \dfrac{3}{bc} \right)^2 - \dfrac{64 \left(9-4 b^2\right)}{9 c^2}  > 0, \hfill \\
C_2:& \quad q = \dfrac{16 \left(9-4 b^2\right)}{9 c^2}  < 0.
\end{aligned}
\end{equation}

\smallskip

Let's choose arbitrarily $b$ as the ``canard parameter'' or ``duck parameter''. Obviously, it appears that if the condition $C_2$ is verified then the condition $C_1$ is \textit{de facto} satisfied. Finally, the \textit{pseudo singular points} are of saddle-type if and only if we have:

\begin{equation}
\label{eq43}
b > \dfrac{3}{2} \quad \mbox{ or } \quad b < - \dfrac{3}{2}.
\end{equation}

\smallskip

\begin{remark}
Because of the symmetry of this coupled FitzHugh-Nagumo equations, the Jacobian matrix of system (\ref{eq33}) evaluated at the \textit{pseudo singular points} (\ref{eq35}c) provides the same result as just above.
\end{remark}

\newpage

\section{Hodgkin-Huxley model}
\label{Sec5}

The original Hodgkin-Huxley model \cite{Hodgkin} is described by the following system of four nonlinear ordinary differential
equations:

\begin{subequations}
\label{eq44}
\begin{align}
\frac{dV}{dt} & = \frac{1}{C_M}\left[I -\bar{g}_Kn^4(V-V_K) - \bar{g}_{Na}m^3h(V-V_{Na}) - \bar{g}_L (V-V_L)\right]\\
\frac{dn}{dt} & =  \alpha_n(V) (1-n) - \beta_n(V)n\\
\frac{dm}{dt} & = \alpha_m(V) (1-m) - \beta_m(V)m\\
\frac{dh}{dt} & = \alpha_h(V) (1-h) - \beta_h(V)h
\end{align}
\end{subequations}

where:

\begin{subequations}
\begin{align}
\label{eq45}
\alpha_n(V) & = 0.01(V + 10) / \left( \exp \dfrac{V + 10}{10} - 1 \right),\\
\beta_n(V) & = 0.125 \exp \left( V/80 \right), \\
\alpha_m(V) & = 0.1(V + 25) / \left( \exp \dfrac{V + 25}{10} - 1 \right), \\
\beta_m(V) & = 4 \exp (V / 18),\\
\alpha_h(V) & = 0.07 \exp (V / 20), \\
\beta_n(V) & = 1 / \left( \exp \frac{V + 30}{10} + 1 \right)
\end{align}
\end{subequations}

The first equation (\ref{eq44}a) results from the application of Kirchhoff's law to the space clamped squid giant axon. Thus, the total membrane current $C_M dV / dt$ for which $C_M$ represents the specific membrane capacity and $V$ the displacement of the membrane potential from its resting value, is equal to the sum of the following intrinsic currents:

\begin{eqnarray*}
I_K  & = & \bar{g}_Kn^4(V-V_K) \\
I_{Na} & = & \bar{g}_{Na}m^3h(V-V_{Na})\\
I_L & = & \bar{g}_L (V-V_L)
\end{eqnarray*}

where $I_K$ is a delayed rectifier potassium current, $I_{Na}$ is fast sodium current and $I_L$ is the ``leakage current''. The parameter $I$ is the total membrane current density, inward positive, \textit{i.e.} the total current injected into the space clamped squid giant axon and $V_K$, $V_{Na}$ and $V_L$ are the equilibrium potentials of potassium, sodium and ``leakage current'' respectively. The maximal specific conductances of the ionic currents are denoted $\bar{g}_K$, $\bar{g}_{Na}$ and $\bar{g}_L$ respectively. Functions $\alpha_{n,m,h}$ and $\beta_{n,m,h}$ are gates' opening and closing rates depending on $V$. Variable $m$ denotes the activation of the sodium current, variable $h$ the inactivation of the sodium current and variable $n$ the activation of the potassium current. These dimensionless gating variables vary between $[0,1]$.

Let's notice that the variables and symbols in Eqs. (44 \& 45) originally chosen by Hodgkin-Huxley and are different from those found in recent literatures where the reference polarity of the voltage $V$, and the reference direction of the current $I$ are defined as the negative of the voltages and currents. We have opted to adopt the reference assumption in Hodgkin \& Huxley \cite{Hodgkin} for ease in
comparison of our results with those from Hodgkin and Huxley\footnote{For more details see Chua \textit{et al.} \cite{ChuaSbitnev2012a,ChuaSbitnev2012b}}.
The parameter values are exactly those chosen in the original Hodgkin-Huxley \cite{Hodgkin} works:

\begin{eqnarray*}
C_M & = &  1.0 \mbox{ }\mu F/cm^2\\
\vspace{0.1pt}
V_{Na} & = & -115 \mbox{ } mV\\
\vspace{0.1pt}
V_K & = & 12 \mbox{ } mV\\
\vspace{0.1pt}
V_L & = & -10.613 \mbox{ } mV\\
\vspace{0.1pt}
\bar{g}_{Na} & = & 120 \mbox{ } mS/cm^2\\
\vspace{0.1pt}
\bar{g}_K & = & 36 \mbox{ } mS/cm^2\\
\vspace{0.1pt}
\bar{g}_L & = & 0.3 \mbox{ } mS/cm^2\\
\end{eqnarray*}

According to Suckley and Biktashev \cite{SuckBikt} and Suckley \cite{Suckley}, dimensionless functions $\bar{n}$, $\bar{h}$ and $\bar{m}$ called gates' instant equilibrium values, \textit{i.e.}, steady-state relation for gating variable $n$, $h$ and $m$ respectively as well as $\tau_n$, $\tau_h$ and $\tau_m$ called gates dynamics time scales in $ms$, \textit{i.e.}, time constant for gating variable $n$, $h$ and $m$ respectively may be defined as follows:

\begin{subequations}
\label{eq46}
\begin{align}
\bar{n}(V) & = \frac{\alpha_n (V)}{\alpha_n (V) + \beta_n (V) } \\
\bar{h}(V) & = \frac{\alpha_h (V)}{\alpha_h (V) + \beta_h (V) } \\
\bar{m}(V) & = \frac{\alpha_m (V)}{\alpha_m (V) + \beta_m (V) }\\
\tau_n (V) & = \frac{1}{\alpha_n (V) + \beta_n (V)} \\
\tau_h (V) & = \frac{1}{\alpha_h (V) + \beta_h (V)} \\
\tau_m (V) & = \frac{1}{\alpha_m (V) + \beta_m (V)}
\end{align}
\end{subequations}

\vspace{0.1in}

By using Eqs. \ref{eq46}, the original Hodgkin-Huxley model \cite{Hodgkin} reads:

\begin{subequations}
\label{eq47}
\begin{align}
\frac{dV}{dt} & = \frac{1}{C_M}\left[I -\bar{g}_Kn^4(V-V_K) - \bar{g}_{Na}m^3h(V-V_{Na}) - \bar{g}_L (V-V_L)\right]\\
\frac{dn}{dt} & = \frac{\bar{n} - n}{\tau_n}\\
\frac{dh}{dt} & = \frac{\bar{h} - h}{\tau_h}\\
\frac{dm}{dt} & = \frac{\bar{m} - m}{\tau_m}
\end{align}
\end{subequations}

Now, in order to apply the \textit{singular perturbation method} to the Hodgkin-Huxley model, two small multiplicative parameters $\varepsilon \ll 1$ are introduced. According to  Suckley and Biktashev \cite{SuckBikt}, Suckley \cite{Suckley} and Rubin and Wechselberger \cite{Rubin}, the existence of two different time scales of evolution for the couples of dynamic variables ($n,h$) and ($m,V$) enables to justify such an introduction. So, in order to differentiate \textit{slow} variables from \textit{fast} variables, Suckley and Biktashev \cite{SuckBikt}, Suckley \cite{Suckley} and Rubin and Wechselberger \cite{Rubin} have plotted the inverse of ``time constant for gating variable $i$'', \textit{i.e.}, ${\tau_i}^{-1}$ according to $V$ with $i=n,h,m$. In Fig. 1, they have been plotted for the original functions ${\alpha_i}$ and ${\beta_i}$ (Eqs. \ref{eq45}). However, let's notice that this plot is exactly the same as those presented by Rubin and Wechselberger \cite{Rubin} (Fig. 1) for a nondimensionalized three-dimensional Hodgkin-Huxley singularly perturbed system obtained after the following variable changes: $V \to -V$ and $\bar{I} \to -\bar{I}$, then $V \to V + 65$ and finally $V \to V / 100$.

\begin{figure}[htbp]
\centerline{\includegraphics[width=10cm,height=10cm]{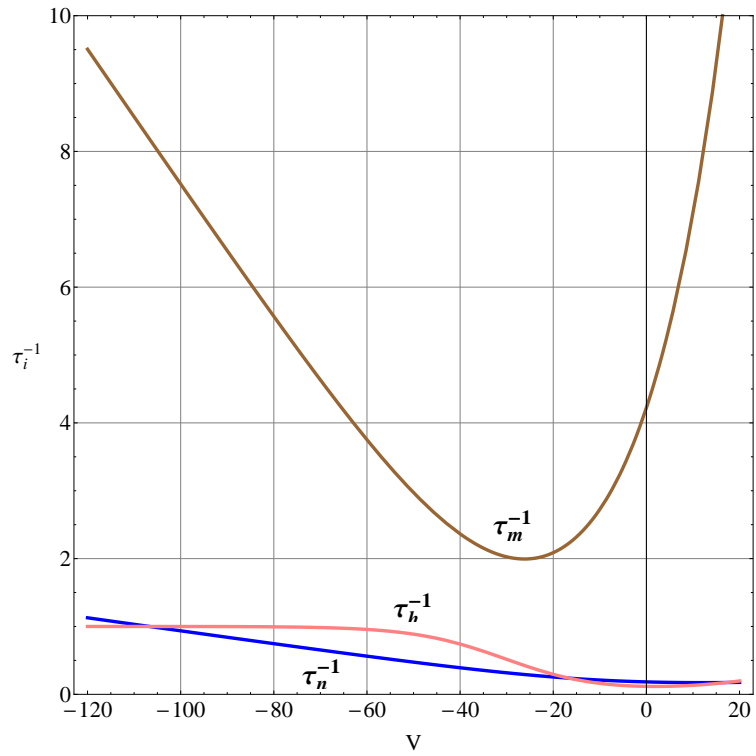}}
\caption{Graph of $1 / \tau_i$ (ms$^{-1}$) against V ($mV$).}
\label{Fig3}
\end{figure}

Fig. 1 shows a plot of the functions ${\tau_i}^{-1}$ according to $V$ with $i=n,h,m$ over the physiological range. We observe that ${\tau_m}^{-1}$ is of an order of magnitude bigger than ${\tau_h}^{-1}$ and ${\tau_n}^{-1}$, which are of comparable size. Indeed, we can deduce that the values of times scales are approximately ${\tau_m}^{-1} \approx 10 ms^{-1}$ while ${\tau_n}^{-1} \approx {\tau_h}^{-1} \approx 1 ms^{-1}$. Then, it appears that $m$ corresponds to the \textit{fast} variable while $n$ and $h$ correspond to \textit{slow} variables. Moreover, since the activation of the sodium channel $m$ is directly related to the dynamics of the membrane (action) potential $V$, Rubin and Wechselberger \cite{Rubin} consider that $m$ and $V$ evolve on the same \textit{fast} time scale. So, the Hodgkin-Huxley model may be transformed into a \textit{singularly perturbed system} with two time scales in which the \textit{slow} variables are ($n,h$) and the \textit{fast} variables are ($m,V$).\\

So, according to Awiszus \textit{et al.} \cite{Awisus}, Suckley and Biktashev \cite{SuckBikt}, Suckley \cite{Suckley} and Rubin and Wechselberger \cite{Rubin} small multiplicative parameters $0 < \varepsilon \ll 1$ in the original vector field of the Hodgkin-Huxley Eqs. (\ref{eq47}) may be identified while factorizing the right hand side of Eq. (\ref{eq47}a) by $\bar{g}_{Na}$ and set:

\begin{eqnarray*}
\bar{g}_{Na} & \to & \frac{\bar{g}_{Na}}{\bar{g}_{Na}} = 1 \mbox{, } \bar{g}_{K} \to \frac{\bar{g}_{K}}{\bar{g}_{Na}} = 0.3 \mbox{, } \bar{g}_{L} \to \frac{\bar{g}_{L}}{\bar{g}_{Na}} = 0.0025.
\end{eqnarray*}

other parameters are kept as for the original Hodgkin-Huxley model \cite{Hodgkin}:

\begin{eqnarray*}
C_M & = &  1.0 \mbox{ }\mu F/cm^2 \mbox{, } \bar{V}_{Na} =  -115 \mbox{ } mV \mbox{, } \bar{V}_K = 12 \mbox{ } mV \mbox{, } \bar{V}_L = -10.613 \mbox{ } mV.
\end{eqnarray*}

Then, by posing $\bar{I} \to \dfrac{ \bar{I} }{\bar{g}_{Na}}$, $\varepsilon = \dfrac{ C_M }{\bar{g}_{Na} } = \dfrac{1}{120}$ and ($n, h, m, V$) $=$ ($x_1, x_2, y_1, y_2$) to consistent with the notations of Sec. 3, we obtain:

\begin{subequations}
\label{eq48}
\begin{align}
\frac{dx_1}{dt} & = \frac{\bar{x}_1 - x_1}{\tau_1}  = f_1 \left( x_1, x_2, y_1, y_2 \right) \\
\frac{dx_2}{dt} & = \frac{\bar{x}_2 - x_2}{\tau_2} = f_1 \left( x_1, x_2, y_1, y_2 \right) \\
\varepsilon  \frac{dy_1}{dt} & = \frac{\bar{y}_1 - y_1}{\tau_3} = g_1 \left( x_1, x_2, y_1, y_2 \right)\\
\varepsilon \frac{dy_2}{dt} & = \bar{I} -\bar{g}_K x^4_1 (y_2 - V_K) - \bar{g}_{Na} y^3_1 x_2 (y_2 - V_{Na}) - \bar{g}_L (y_2 - V_L) \nonumber \\
&  =  g_2 \left( x_1, x_2, y_1, y_2 \right)
\end{align}
\end{subequations}

\smallskip

where ($\bar{x}_1, \bar{x}_2, \bar{y}_1$)  $=$ ($\bar{n}, \bar{h}, \bar{m}$) and $\tau_{1,2,3} = \tau_{n,h,m}$.\\

Let's notice that the multiplicative parameter $\varepsilon$ has been introduced artificially in Eq. (\ref{eq48}c). This is due to the fact that it has been stated above that the time scale of variable $m$, \textit{i.e.}, $y_1$ is tenth times greater than the time scale of variables $n$ and $h$, \textit{i.e.} of variables $x_1$ and $x_2$. Moreover, this parameter is identical to those use in Eq. (\ref{eq48}d) since it has been also considered that $m$ and $V$, \textit{i.e.}, $y_1$ and $y_2$ evolve on the same \textit{fast} time scale.\\

According to the \textit{Geometric Singular Perturbation Theory}, the zero-order approximation in $\varepsilon$ of the \textit{slow manifold} associated with the Hodgkin-Huxley model (48) is obtained by posing $\varepsilon = 0$ in Eqs. (\ref{eq48}c \& \ref{eq48}d). So, the \textit{slow manifold} is given by:

\begin{subequations}
\label{eq49}
\begin{align}
x_2 & = \frac{\bar{I} -\bar{g}_K x^4_1 (y_2 - V_K) - \bar{g}_L (y_2 - V_L)}{\bar{g}_{Na} \bar{y}^3_1 (y_2 - V_{Na})}\\
y_1 & = \bar{y}_1 (y_2)
\end{align}
\end{subequations}

\vspace{0.1in}

Then, the \textit{fast foliation} is within the planes $x_1 = constant$ and $x_2 = constant$.\\

The \textit{fold curve} is defined as the location of the points where $g_1\left(x_1,x_2,y_1,y_2 \right) = 0$, $g_2\left(x_1,x_2,y_1,y_2 \right) = 0$ and $det\left[ J_{(g_1, g_2)} \right] = 0$. For the Hodgkin-Huxley model (\ref{eq48}), the \textit{fold curve} is thus given by Eqs. (\ref{eq49}a \& \ref{eq49}b) and by the determinant of the Jacobian matrix of the following \textit{fast foliation}:

\begin{subequations}
\label{eq50}
\begin{align}
\frac{dy_1}{dt} & = \frac{\bar{y}_1 - y_1}{\tau_3} = g_1 \left( x_1, x_2, y_1, y_2 \right)\\
\frac{dy_2}{dt} & = \bar{I} -\bar{g}_K x^4_1 (y_2 - V_K) - \bar{g}_{Na} y^3_1 x_2 (y_2 - V_{Na}) - \bar{g}_L (y_2 - V_L) \nonumber \\
 & = g_2 \left( x_1, x_2, y_1, y_2 \right)
\end{align}
\end{subequations}

\smallskip

The Jacobian matrix of the \textit{fast foliation} (\ref{eq50}) reads:

\begin{equation}
\label{eq51}
J_{(g_1, g_2)} = \begin{pmatrix}
\dfrac{\bar{y_1}'\tau_3 - \tau'_3(\bar{y_1} - y_1)}{\tau^2_3} \quad & \quad -\dfrac{1}{\tau_3} \vspace{6pt} \\
-(\bar{g}_K x^4 + \bar{g}_{Na} y^3_1 x_2 + \bar{g}_L ) \quad & \quad  - 3 \bar{g}_{Na} y^2_1 x_2 (y_2 - V_{Na})\\
\end{pmatrix}
\end{equation}

\vspace{0.1in}

where the ($'$) denotes the derivative with respect to $y_2$. Then, taking into account Eqs. (\ref{eq49}b), \textit{i.e.}, $y_1 = \bar{y}_1$ we have:

\begin{equation}
\label{eq52}
J_{(g_1, g_2)} = \begin{pmatrix}
\dfrac{\bar{y_1}'}{\tau_3} \quad & \quad -\dfrac{1}{\tau_3} \vspace{6pt} \\
-(\bar{g}_K x^4_1 + \bar{g}_{Na} \bar{y}^3_1 x_2 + \bar{g}_L ) \quad & \quad  - 3 \bar{g}_{Na} \bar{y}^2_1 x_2 (y_2 - V_{Na})\\
\end{pmatrix}
\end{equation}

\vspace{0.1in}

So, the determinant of the Jacobian matrix of the \textit{fast foliation} (\ref{eq50}) is:

\begin{equation}
\label{eq53}
det\left(J_{(g_1, g_2)} \right) = - \frac{1}{\tau_3} \left[ \bar{g}_K x^4_1 +  \bar{g}_{Na} \bar{y}^3_1 x_2 + \bar{g}_L + 3 \bar{g}_{Na} \bar{y_1}' \bar{y}^2_1 x_2 (y_2 - V_{Na}) \right]
\end{equation}

\vspace{0.1in}

Thus, the condition for the \textit{fold curve} is $det\left(J_{(g_1, g_2)} \right) = 0$, which gives:

\begin{equation}
\label{eq54}
\bar{g}_K x^4_1 +  \bar{g}_{Na} \bar{y}^3_1 x_2 + \bar{g}_L + 3 \bar{g}_{Na} \bar{y_1}' \bar{y}^2_1 x_2 (y_2 - V_{Na}) = 0
\end{equation}

Therefore:

\begin{equation}
\label{eq55}
x_2 = - \frac{\bar{g}_K x^4_1 + \bar{g}_L}{\bar{g}_{Na} \bar{y}^2_1 \left( \bar{y}_1 + 3 \bar{y_1}' (y_2 - V_{Na}) \right)}
\end{equation}

\vspace{0.1in}

By subtracting Eq. (\ref{eq49}a) from Eq. (\ref{eq55}) we obtain $x_1$:

\begin{equation}
\label{eq56}
x_1 = x_\textit{1f} =  \left[\frac{-\bar{I}\left[ \bar{y}_1 + 3 \bar{y_1}' (y_2 - V_{Na}) \right] + \bar{g}_L (V_{Na} - V_L)\bar{y}_1 + 3 \bar{y_1}' (y_2 - V_{Na})(y_2 - V_L)}{\bar{g}_K \left[ (V_K - V_{Na})\bar{y}_1 - 3 \bar{y_1}' (y_2 - V_{Na})(y_2 - V_K) \right] } \right]^{1/4}
\end{equation}

\vspace{0.1in}

Plugging this value of $x_1$ (\ref{eq56}) into Eq. (\ref{eq55}) provides:

\begin{equation}
\label{eq57}
x_2 = x_\textit{2f} =\frac{\bar{I} + \bar{g}_L (V_K - V_L)}{\bar{g}_{Na} \bar{y}^2_1 \left[ (V_{Na} - V_K )\bar{y}_1 + 3 \bar{y_1}' (y_2 - V_{Na})(y_2 - V_K) \right]}
\end{equation}

\vspace{0.1in}

So, the \textit{fold curve} is given by the set of parametric equations (\ref{eq56}-\ref{eq57}) in terms of $y_2$.\\

The \textit{pseudo singular points} are given by Eqs. (\ref{eq20}) which reads for the Hodgkin-Huxley model (\ref{eq48}):

\begin{subequations}
\label{eq58}
\begin{align}
& \frac{\bar{y}_1 - y_1}{\tau_3} = 0, \\
& \bar{I} -\bar{g}_K x^4_1 (y_2 - V_K) - \bar{g}_{Na} y^3_1 x_2 (y_2 - V_{Na}) - \bar{g}_L (y_2 - V_L) = 0, \\
& \left[ \frac{ 4 \bar{g}_K x^3_1 (y_2 - V_K) (x_1 - \bar{x}_1) }{ \tau_1 } + \frac{ \bar{g}_{Na} y^3_1 (y_2 - V_{Na}) (x_2 - \bar{x}_2)  }{ \tau_2 } \right] = 0, \\
&  \left[ \frac{ 4 \bar{g}_K x^3_1 (y_2 - V_K) (x_1 - \bar{x}_1) }{ \tau_1 } + \frac{ \bar{g}_{Na} y^3_1 (y_2 - V_{Na}) (x_2 - \bar{x}_2)  }{ \tau_2 } \right]\frac{1}{\tau_3}  = 0, \\
& \tau_3 \left( \bar{g}_K x^4_1 + \bar{g}_{Na} y^3_1 x_2 + \bar{g}_L  \right)  + 3 \bar{g}_{Na} y^2_1 x_2 (y_2 - V_{Na}) (\tau_3 y_1' + (y_1 - \bar{y}_1 ) \tau_3')  = 0.
\end{align}
\end{subequations}

\vspace{0.1in}

Let's notice that Eqs. (\ref{eq58}c) and (\ref{eq58}d) are identical. Moreover, the definition of $\tau_3$ (\ref{eq46}f) enables to simplify the above system (\ref{eq58}). Thus, we have:

\begin{subequations}
\label{eq59}
\begin{align}
& \bar{I} -\bar{g}_K x^4_1 (y_2 - V_K) - \bar{g}_{Na} \bar{y}^3_1 x_2 (y_2 - V_{Na}) - \bar{g}_L (y_2 - V_L) = 0, \\
&  \frac{ 4 \bar{g}_K x^3_1 (y_2 - V_K) (x_1 - \bar{x}_1) }{ \tau_1 } + \frac{ \bar{g}_{Na} \bar{y}^3_1 (y_2 - V_{Na}) (x_2 - \bar{x}_2)  }{ \tau_2 }= 0, \\
& \bar{g}_K x^4_1 + \bar{g}_{Na} \bar{y}^3_1 x_2 + \bar{g}_L  + 3 \bar{g}_{Na} \bar{y}^2_1 \bar{y}_1' x_2 (y_2 - V_{Na})  = 0.
\end{align}
\end{subequations}

\smallskip

Moreover, Eqs. (\ref{eq59}a) and (\ref{eq59}c) indicate that the \textit{pseudo singular point} belongs to the \textit{slow manifold} and to the \textit{fold curve}. So, let's replace in Eq. (\ref{eq59}b) the variables $x_1$ and $x_2$ by the variables $x_\textit{1f}$ and $x_\textit{2f}$ given by Eq. (\ref{eq56}) and Eq. (\ref{eq57}) respectively which represent the parametric equations of \textit{fold curve}.

\begin{equation}
\label{eq60}
\frac{ 4 g_K x^3_\textit{1f} (y_2 - V_K) (x_\textit{1f} - \bar{x}_1) }{ \tau_1 } + \frac{ \bar{y}^3_1 (y_2 - V_{Na}) (x_\textit{2f} - \bar{x}_2)  }{ \tau_2 }  = 0.
\end{equation}

\vspace{0.1in}

Thus, it appears that Eq. (\ref{eq60}) depends on the variable $y_2$, on the functions gates dynamics time scales $\tau_1(y_2)$ and $\tau_2(y_2)$ and on the bifurcation parameter $\bar{I}$. According to Rubin and Wechselberger \cite{Rubin}, the function $y_2(\bar{I})$, solution of (\ref{eq60}) is independent of time multiplicative constants $k_1$ and $k_2$ that one could set in factor of $\tau_1(y_2)$ and $\tau_2(y_2)$.\\

So, following their works, let's plot the function $y_2(\bar{I})$ solution of (\ref{eq60}) for various values of these time constants by posing successively in (\ref{eq60}) $k_1 = 1$, $3$, $4.75$ and $7$ and while fixing $k_2 =1$. The result is presented in Fig. 2.

\begin{figure}[htbp]
\centerline{\includegraphics[width=10cm,height=10cm]{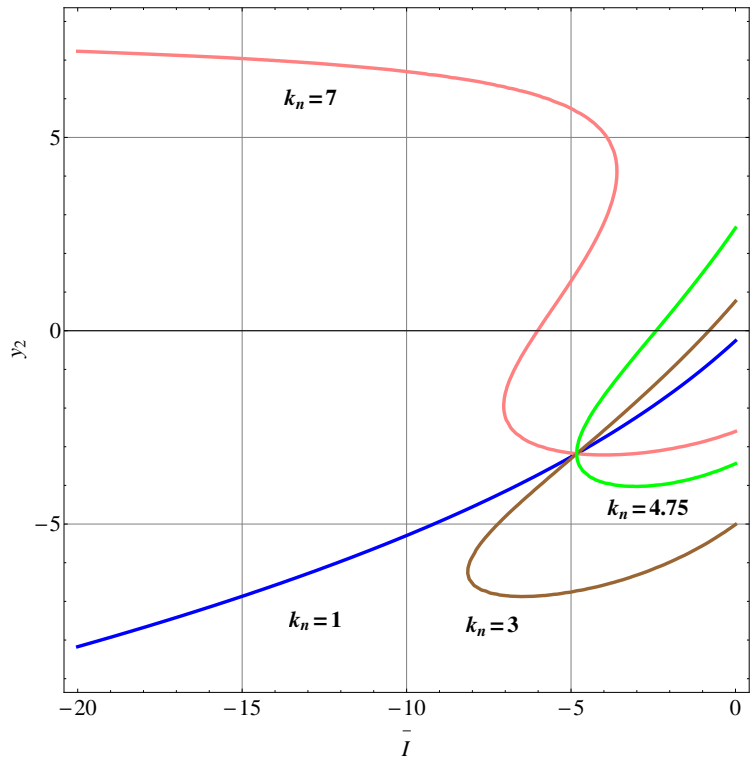}}
\caption{Function $y_2(\bar{I})$ for various values of parameter $k_n = 1, 3, 4.75, 7$ \\ \centering{\vphantom{NN}\hspace{1cm}exhibiting the the bifurcation parameter value $\bar{I}_C \approx - 4.8$}.}
\label{Fig4}
\end{figure}

\smallskip

Let's notice that this plot\footnote{The function $y_2(\bar{I})$ solution of (\ref{eq59}) has been plotted with Mathematica$^\copyright$ while using the ContourPlot function used for representing implicit function since such function cannot be expressed explicitly.} is exactly the same as those presented by Rubin and Wechselberger \cite{Rubin} (Fig. 8-9) for a nondimensionalized three-dimensional Hodgkin-Huxley singularly perturbed system which had been obtained after the following variable changes: $V \to -V$ and $\bar{I} \to -I$, then $V \to V + 65$ and finally $V \to V / 100$.

We observe from Fig. 2 that the bifurcation parameter value $\bar{I}_C \approx - 4.8$ is exactly identical (in absolute value) to those obtained by Rubin and Wechselberger \cite{Rubin}. Numerical resolution\footnote{This resolution has been made while using the function FindRoot in Mathematica$^\copyright$.} of Eq. (\ref{eq60}) provides a better approximation of the bifurcation parameter value:

\[
\bar{I}_C = -4.82988 \mbox{ } \mu A
\]

\smallskip

This value corresponds to a voltage $y_2 = -3.18136 \mbox{ } mV$.\\

For $\bar{I} \approx - 4.1$, the coordinate of the \textit{pseudo singular point} can be computed numerically:

\[
(x_1,x_2,y_1,y_2) = (0.362513, 0.521793, 0.0733782, -2.81908)
\]

\smallskip

According to Proposition \ref{prop1} we can state that the eigenpolynomial of the Jacobian matrix associated with the ``normalized slow dynamics'' of the Hodgkin-Huxley model (\ref{eq48}) reads:

\[
\lambda^4 - \sigma_1 \lambda^3 + \sigma_2 \lambda^2 - \sigma_3 \lambda + \sigma_4 = 0
\]

for which it is easy to prove that $\sigma_4 = \sigma_3 = 0$. So, this eigenpolynomial reduces to:

\[
\lambda^2\left( \lambda^2 - \sigma_1 \lambda + \sigma_2 \right) = 0
\]

\smallskip

According to Eqs. (\ref{eq28}) we find that:

\[
\begin{aligned}
p & = Tr(J) = 144.933, \hfill  \\
q & = \sigma_2 = -362.924
\end{aligned}
\]

\smallskip

Thus, according to Prop. \ref{prop1}, the \textit{pseudo singular points} is of saddle-type. Moreover, numerical computation of the eigenvalues of this Jacobian matrix evaluated at the \textit{pseudo singular point} provides:

\[
(\lambda_1, \lambda_2, \lambda_3, \lambda_4) = (-2.46224, 147.396, 0,0)
\]

\vspace{0.1in}

So, according to Proposition \ref{prop1}, this \textit{pseudo singular point} is of saddle-type and canard solution may occur in the four-dimensional Hodgkin-Huxley singularly perturbed system (\ref{eq48}) for the original set of parameter values.\\

In Fig. 3, 4 \& 5 canard solution of the four-dimensional Hodgkin-Huxley singularly perturbed system for the ``canard value'' of $\bar{I} \approx -4.1$ has been plotted in the ($x_1,x_2,y_2$) phase-space and then in the ($x_1,y_2$) phase plane. The green point represents the \textit{pseudo singular point}. The trajectory curve, \textit{i.e.}, the canard solution has been plotted in red while the \textit{fold curve} is in yellow. We observe on Fig. 3 that when the trajectory curve reaches the \textit{fold} at the \textit{pseudo singular point} it ``jump'' suddenly to the other part of the \textit{slow manifold} before being reinjected towards the \textit{pseudo singular point}.

\begin{figure}[htbp]
\centerline{\includegraphics[width=10cm,height=10cm]{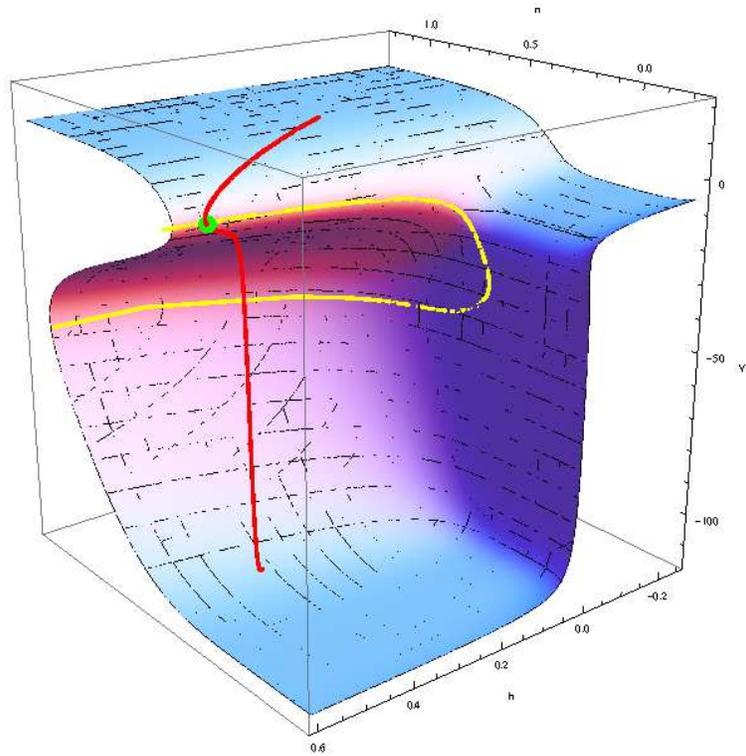}}
\caption{Phase portrait, canard solution and \textit{slow manifold} of the Hodgkin-Huxley system (48) in the ($n,h,V$) phase space.}
\label{Fig3}
\end{figure}

\begin{figure}[htbp]
\centerline{\includegraphics[width=10cm,height=10cm]{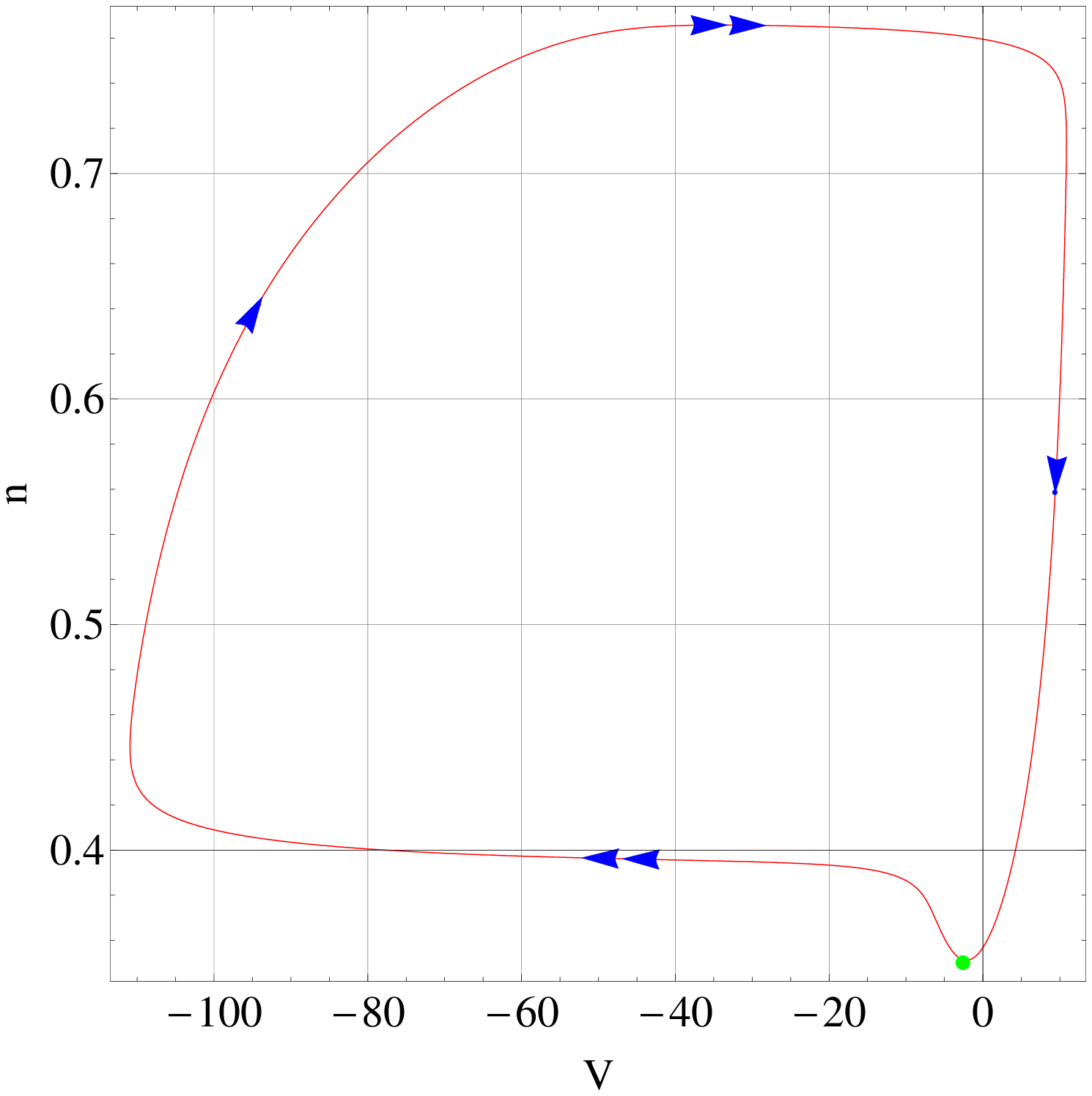}}
\caption{Phase portrait, canard solution and \textit{slow manifold} of the Hodgkin-Huxley system (48) in the ($V,n$) phase plane.}
\label{Fig4}
\end{figure}

\begin{figure}[htbp]
\centerline{\includegraphics[width=10cm,height=10cm]{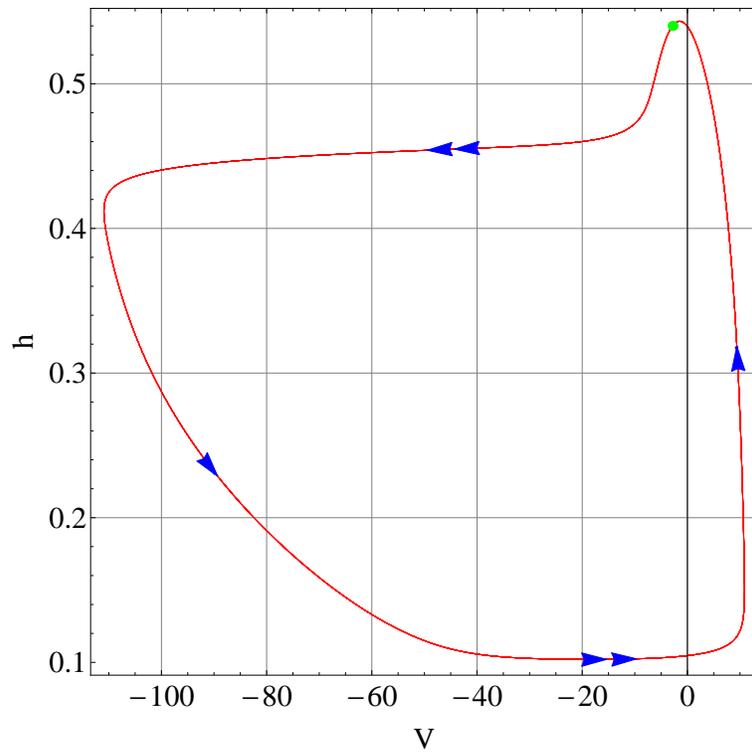}}
\caption{Phase portrait, canard solution and \textit{slow manifold} of the Hodgkin-Huxley system (48) in the ($V,h$) phase plane.}
\label{Fig5}
\end{figure}

\newpage

\section{Discussion}

In a previous paper entitled: ``Canards Existence in Memristor's Circuits'' (see Ginoux \& Llibre \cite{GinouxLLibre2015}) we have proposed a new method for proving the existence of ``canard solutions'' for three and four-dimensional singularly perturbed systems with only one \textit{fast} variable which improves the methods used until now. This method enabled to state a unique ``generic'' condition for the existence of ``canard solutions'' for such three and four-dimensional singularly perturbed systems which is based on the stability of \textit{folded singularities} of the \textit{normalized slow dynamics} deduced from a well-known property of linear algebra. This unique condition which is completely identical to that provided by Beno\^{i}t \cite{Benoit1983} and then by Szmolyan and Wechselberger \cite{SzmolyanWechselberger2001} and finally by Wechselberger \cite{Wechselberger2012} was considered as ``generic'' since it was exactly the same for singularly perturbed systems of dimension three and four with only one \textit{fast} variable. In this work we have extended this new method to the case of four-dimensional singularly perturbed systems with two \textit{slow} and two \textit{fast} variables and we have stated that the condition for the existence of ``canard solutions'' in such systems is exactly identical to those proposed in our previous paper. This result confirms the genericity of the condition ($\sigma_2 < 0$) we have highlighted and provides a simple and efficient tool for testing the occurrence of ``canard solutions'' in any three or four-dimensional singularly perturbed systems with one or two \textit{fast} variables. Applications of this method to the famous coupled FitzHugh-Nagumo equations and to the Hodgkin-Huxley model has enabled to show the existence of ``canard solutions'' in such systems. However, in this paper, only the case of \textit{pseudo singular points} or \textit{folded singularities} of saddle-type has been analyzed. Of course, the case of of \textit{pseudo singular points} or \textit{folded singularities} of node-type and focus-type could be also studied with the same method.

\section{Acknowledgements}
We would like to thank to Ernesto P\'erez Chavela  for previous
discussions related with this work. The authors are partially supported by a MINECO/FEDER grant number
MTM2008-03437. The second author is partially supported by a
MICINN/FEDER grants numbers MTM2009-03437 and MTM2013-40998-P, by an
AGAUR grant number 2014SGR-568, by an ICREA Academia, two
FP7+PEOPLE+2012+IRSES numbers 316338 and 318999, and
FEDER-UNAB10-4E-378.

\renewcommand{\theequation}{A-\arabic{equation}}
\setcounter{equation}{0}  

\newpage

\section*{Appendix}

Change of coordinates leading to the \textit{normal forms} of four-dimensional singularly perturbed systems with two fast variables are given in the following section.

\subsection*{Normal form of 4D singularly perturbed systems\\ with two fast variables}

Let's consider the four-dimensional \textit{singularly perturbed dynamical system} (\ref{eq11}) with $k=2$ \textit{slow} variables and $m=2$ \textit{fast} and let's make the following change of variables:

\begin{equation}
x_1 = \alpha^2 x, \quad  x_2 = \alpha y, \quad  y_1 = \alpha^2 z, \quad y_2 = \alpha u \quad \mbox{where} \quad \alpha \ll 1.
\end{equation}

By taking into account extension of Beno\^{i}t's generic hypothesis Eqs. (\ref{eq22},\ref{eq23}) and while using Taylor series expansion the system (\ref{eq11}) becomes:

\begin{equation}
\label{eqA2}
\begin{aligned}
\dot{x} & = \dfrac{\partial f_1}{\partial y} y + \dfrac{\partial f_1}{\partial u} u, \hfill \vspace{6pt} \\
\dot{y} & = f_2 \left( x, y, z, u \right), \hfill \vspace{6pt} \\
\left( \dfrac{\varepsilon}{\alpha} \right) \dot{z} & = \dfrac{\partial g_1}{\partial z} z + \dfrac{1}{2} \dfrac{\partial^2 g_1}{\partial y^2} y^2 + \dfrac{1}{2} \dfrac{\partial^2 g_1}{\partial u^2} u^2 + \dfrac{\partial^2 g_1}{\partial y \partial u} y u, \hfill \vspace{6pt} \\
\left( \dfrac{\varepsilon}{\alpha} \right) \dot{u} & = \dfrac{\partial g_2}{\partial x} x + \dfrac{1}{2} \dfrac{\partial^2 g_2}{\partial y^2} y^2 + \dfrac{1}{2} \dfrac{\partial^2 g_2}{\partial u^2} u^2 +  \dfrac{\partial^2 g_2}{\partial y \partial u} y u. \hfill
\end{aligned}
\end{equation}

\vspace{0.1in}

Then, let's make the standard polynomial change of variables:

\begin{equation}
\label{eqA3}
\begin{aligned}
X &  = A x + B y^2, \hfill \vspace{6pt} \\
Y & = \dfrac{y}{f_2}, \hfill \vspace{6pt} \\
Z & = C y + D z + E u, \hfill \vspace{6pt} \\
U & = F y + G u. \hfill
\end{aligned}
\end{equation}

\smallskip

From (A-3) we deduce that:

\begin{equation}
\label{eqA4}
\begin{aligned}
x & = \frac{X - B f_2^2 Y^2}{A}, \hfill \vspace{6pt} \\
y & = f_2 Y, \hfill \vspace{6pt} \\
z & = \frac{1}{D} \left[ Z - C f_2 Y - \frac{E}{G} \left( U - F f_2 Y \right) \right], \hfill \vspace{6pt} \\
u & = \frac{U - F f_2 Y }{G}. \hfill
\end{aligned}
\end{equation}

The time derivative of system (A-3) gives:

\begin{equation}
\label{eqA5}
\begin{aligned}
\dot{X} & = A \dot{x} + 2 B y \dot{y}, \hfill \vspace{6pt} \\
\dot{Y} & = \dfrac{\dot{y}}{f_2}, \hfill \vspace{6pt} \\
\dot{Z} & = C \dot{y} + D \dot{z} + E \dot{u}, \hfill \vspace{6pt} \\
\dot{U} & =  F\dot{z} + G \dot{u}. \hfill
\end{aligned}
\end{equation}

Then, multiplying the third and fourth equation of (A-5) by $( \varepsilon / \alpha)$ and while replacing in (A-5) $\dot{x}$, $\dot{y}$, $\dot{z}$ and $\dot{u}$ by the right-hand-side of system (A-2) leads to:

\begin{equation}
\label{eqA6}
\begin{aligned}
\dot{X} & = A \dot{x} + 2 B y \dot{y}, \hfill \vspace{6pt} \\
\dot{Y} & = \dfrac{\dot{y}}{f_2}, \hfill \vspace{6pt} \\
\left( \dfrac{\varepsilon}{\alpha} \right)\dot{Z} & = \left( \dfrac{\varepsilon}{\alpha} \right) C \dot{y} + \left( \dfrac{\varepsilon}{\alpha} \right) D \dot{z} + \left( \dfrac{\varepsilon}{\alpha} \right) E \dot{u}, \hfill \vspace{6pt} \\
\left( \dfrac{\varepsilon}{\alpha} \right)\dot{U} & = \left( \dfrac{\varepsilon}{\alpha} \right) F \dot{y} + \left( \dfrac{\varepsilon}{\alpha} \right) G \dot{u}. \hfill
\end{aligned}
\end{equation}

\smallskip

Since $ \varepsilon / \alpha \ll 1$, the first terms of the right-hand-side of the third and fourth equation of (A-16) can be neglected. So we have:

\begin{equation}
\label{eqA7}
\begin{aligned}
\dot{X} & = A \left( \dfrac{\partial f_1}{\partial y} y  + \dfrac{\partial f_1}{\partial u} u\right)  + 2 B f_2 y, \hfill \vspace{6pt} \\
\dot{Y} & = 1, \hfill \vspace{6pt} \\
\left( \dfrac{\varepsilon}{\alpha} \right)\dot{Z} & = D \left( \dfrac{\partial g_1}{\partial z} z + \dfrac{1}{2}\dfrac{\partial^2 g_1}{\partial y^2} y^2 + \dfrac{1}{2}\dfrac{\partial^2 g_1}{\partial u^2} u^2 + \dfrac{\partial^2 g_1}{\partial y \partial u} y u \right) \hfill \\
 & + E \left( \dfrac{\partial g_2}{\partial x} x + \dfrac{1}{2}\dfrac{\partial^2 g_2}{\partial y^2} y^2 + \dfrac{1}{2}\dfrac{\partial^2 g_2}{\partial u^2} u^2 + \dfrac{\partial^2 g_2}{\partial y \partial u} y u \right), \hfill \vspace{6pt} \\
\left( \dfrac{\varepsilon}{\alpha} \right)\dot{U} & = G \left( \dfrac{\partial g_2}{\partial x} x + \dfrac{1}{2}\dfrac{\partial^2 g_2}{\partial y^2} y^2 + \dfrac{1}{2}\dfrac{\partial^2 g_2}{\partial u^2} u^2 + \dfrac{\partial^2 g_2}{\partial y \partial u} y u \right). \hfill
\end{aligned}
\end{equation}

\smallskip

Then, by replacing in (A-7) $x$, $y$, $z$ and $u$ by the right-hand-side of (A-4) and by identifying with the following system in which we have posed: $(\varepsilon / \alpha) = \epsilon$:

\begin{equation}
\label{eqA8}
\begin{aligned}
\dot{X} & = \tilde{a} Y + \tilde{b} U + O \left( X, \epsilon, Y^2, Y U, U^2 \right), \hfill \vspace{6pt} \\
\dot{Y} & = 1 + O \left( X, Y, U, \epsilon \right), \hfill \vspace{6pt} \\
\epsilon \dot{Z} & = \tilde{c} Z + O \left( \epsilon X, \epsilon Y, \epsilon U, X^2, y^2 U, U^2, Y U  \right), \hfill \vspace{6pt} \\
\epsilon \dot{U} & =  -\left( X + U^2 \right) + O \left( \epsilon X, \epsilon Y, \epsilon U, \epsilon^2, X^2 U, U^3, X Y U  \right), \hfill
\end{aligned}
\end{equation}

\vspace{0.1in}
we find:

\begin{equation}
\label{eqA9}
\begin{aligned}
\tilde{a} & = A \left( \dfrac{\partial f_1}{\partial y} - \dfrac{F}{G} \dfrac{\partial f_1}{\partial u} \right)f_2 + 2 B f_2^2, \hfill \\
\tilde{b} & = \dfrac{A}{G} \dfrac{\partial f_1}{\partial u}, \hfill  \\
\tilde{c} & = \dfrac{\partial g_1}{\partial z}. \hfill
\end{aligned}
\end{equation}

where

\begin{equation}
\label{eqA10}
\begin{aligned}
A & = \frac{1}{2} \dfrac{\partial g_2}{\partial x} \dfrac{\partial^2 g_2}{\partial u^2}, \hfill \vspace{6pt} \\
B & = \dfrac{1}{4} \left[ \dfrac{\partial^2 g_2}{\partial u^2}\dfrac{\partial^2 g_2}{\partial y^2}  - \left( \dfrac{\partial^2 g_2}{\partial y \partial u} \right)^2 \right], \hfill \vspace{6pt} \\
G & = - \frac{1}{2} \dfrac{\partial^2 g_2}{\partial u^2}. \hfill
\end{aligned}
\end{equation}

Finally, we deduce:

\begin{equation}
\label{eqA11}
\begin{aligned}
\tilde{a} & = \frac{1}{2} \left[ f_2^2 \left( \dfrac{\partial^2 g_2}{\partial x_2^2 } \dfrac{\partial^2 g_2}{\partial y_2^2} - \left(\dfrac{\partial^2 g_2}{\partial x_2 \partial y_2} \right)^2 \right) + f_2 \dfrac{\partial g_2}{\partial x_1} \left( \dfrac{\partial^2 g_2}{\partial y_2^2 }\dfrac{\partial f_1}{\partial x_2} - \dfrac{\partial^2 g_2}{\partial x_2 \partial y_2 } \dfrac{\partial f_1}{\partial y_2} \right) \right] \hfill \\
\tilde{b} & = - \dfrac{\partial g_2}{\partial x_1}\dfrac{\partial f_1}{\partial y_2}, \hfill  \\
\tilde{c} & = \dfrac{\partial g_1}{\partial y_1}. \hfill
\end{aligned}
\end{equation}

\smallskip

This is the result we established in Sec. 2.7.

\newpage

\end{article}
\end{document}